\newtheorem{definition}{Definition}
\newtheorem{theorem}{Theorem}
\newtheorem{lemma}{Lemma}
\newtheorem{observation}{Observation}
\title{Nyldon Factorization of \\ Thue-Morse Words and Fibonacci Words}
\date{}
\author[1]{Kaisei~Kishi}
\author[2]{Kazuki~Kai}
\author[2]{Yuto~Nakashima}
\author[2]{Shunsuke~Inenaga}
\author[3]{Hideo~Bannai}
\affil[1]{Department of Information Science and Technology, Kyushu University, Japan}
  \affil[ ]{\texttt{kishi.kaisei.216@s.kyushu-u.ac.jp}}
\affil[2]{Department of Informatics, Kyushu University, Japan}
  \affil[ ]{\texttt{\{nakashima.yuto.003, inenaga.shunsuke.380\}@m.kyushu-u.ac.jp}}
\affil[3]{M\&D Data Science Center, Institute of Integrated Research, Institute of Science Tokyo, Japan}
  \affil[ ]{\texttt{hdbn.dsc@tmd.ac.jp}}
\begin{document}
\maketitle
\begin{abstract}
    The Nyldon factorization is a string factorization that is a non-decreasing product of Nyldon words.
    Nyldon words and Nyldon factorizations are recently defined combinatorial objects inspired by the well-known Lyndon words and Lyndon factorizations.
    In this paper, we investigate the Nyldon factorization of several words.
    First, we fully characterize the Nyldon factorizations of the (finite) Fibonacci and the (finite) Thue-Morse words.
    Moreover, we show that there exists a non-decreasing product of Nyldon words that is a factorization of the infinite Thue-Morse word.
\end{abstract}

\newcommand{\infF}{\mathcal{F}}
\newcommand{\LF}{\mathit{LF}}
\newcommand{\lnps}[1]{\mathsf{lnps}(#1)}
\newcommand{\TM}{\mathit{TM}}
\newcommand{\tm}{\mathit{tm}}
\newcommand{\infTM}{\mathcal{TM}}
\newcommand{\NF}[1]{\mathit{NF}(#1)}
\newcommand{\TMI}[1]{\mathit{t}_{1}(#1)}
\newcommand{\TMII}[1]{\mathit{t}_{2}(#1)}
\newcommand{\TMIII}[1]{\mathit{t}_{3}(#1)}
\newcommand{\TMIV}[1]{\mathit{t}_{4}(#1)}
\newcommand{\TMV}[1]{\mathit{t}_{5}(#1)}
\newcommand{\TMVI}[1]{\mathit{t}_{6}(#1)}
\newcommand{\TMVII}[1]{\mathit{t}_{7}(#1)}
\newcommand{\TMVIII}[1]{\mathit{t}_{8}(#1)}
\newcommand{\TMIX}[1]{\mathit{t}_{9}(#1)}
\newcommand{\TMX}[1]{\mathit{t}_{10}(#1)}
\newcommand{\TMXI}[1]{\mathit{t}_{11}(#1)}
\newcommand{\TMXII}[1]{\mathit{t}_{12}(#1)}
\newcommand{\TMY}[2]{\mathit{t}_{\mathit{#1}}(#2)}
\newcommand{\A}{\alpha}
\newcommand{\B}{\beta}
\newcommand{\tmmorph}{\tau}

\section{Introduction}
The Lyndon factorization~\cite{ChenFL58:_lyndon_factorization_} of a string is the unique factorization 
such that each factor is a Lyndon word and the sequence is a non-increasing product of them.
More precisely, $w_1, \ldots, w_k$ is the Lyndon factorization of a string $w$ if the sequence satisfies the following three conditions:
\begin{enumerate}
    \item $w = w_1 \cdots w_k$,
    \item $w_i \in \Sigma^+$ is a Lyndon word for all $i$ satisfying $1 \leq i \leq k$,
    \item $w_i \succeq w_{i+1}$ for all $i$ satisfying $1 \leq i < k$.
\end{enumerate}
A string $x$ is said to be a Lyndon word~\cite{lyndon54:_burnside} if $x$ is lexicographically smaller than any of its cyclic rotations (or suffixes).
This is a simple and fundamental definition of Lyndon words.
Lyndon words and Lyndon factorizations are well-known combinatorial objects, 
and there are lots of studies on them and its variants  
(e.g.,~\cite{BONIZZONI2018281,DBLP:conf/mfcs/BonizzoniFRZZ24,Lyndon_application_math,DBLP:conf/stringology/HirakawaNIT21,Lyndon_application_algorithm}).
On the other hand, Lyndon words can be defined in a recursive way:
strings of length 1 are Lyndon words, 
strings of length more than 1 are Lyndon words if there is no factorization into a non-increasing product of multiple Lyndon words.

Nyldon words and Nyldon factorizations are symmetric structures which were introduced by Grinberg's natural question~\cite{Nyldon}: 
How do the (Lyndon) structures and properties change when changing from ``non-increase'' to ``non-decrease''?
Charlier et al.~\cite{Nyldon_Charlier} answered several questions on Nyldon words,
for instance, the Nyldon factorization is also unique for any string and there is a unique Nyldon word for each conjugacy class of primitive words.
Later, Garg~\cite{GARG2021_Nyldon-like-set} resolved several questions introduced by Charlier et al.~\cite{Nyldon_Charlier}.
Thanks to their studies, a lot of nice structures and properties of Nyldon words and relations between Lyndon words and Nyldon words were revealed,
but there are still many open questions on Nyldon words and Nyldon factorizations.
One interesting question is given as follows: Can we define Nyldon words by a simpler way such as definitions of Lyndon words by cyclic rotations or suffixes.\footnote{Note that a class of words defined as the set of largest cyclic rotations is the anti-Lyndon words and as the set of largest suffixes is the inverse Lyndon words.}
Such a property may accelerate not only combinatorial studies but also algorithmic studies on strings by using Nyldon words.

In this paper, to further understand Nyldon words and Nyldon factorizations, 
we consider the Nyldon factorizations of Fibonacci and Thue-Morse words.
Our main contributions are three-fold.
\begin{itemize}
    \item We fully characterize the Nyldon factorization of the finite Fibonacci words (Section~\ref{sec:Fib}).
    \item We fully characterize the Nyldon factorization of the finite Thue-Morse words (Section~\ref{sec:finite-TM}).
    \item We show that there exists a non-decreasing sequence of Nyldon words that is a factorization of the infinite Thue-Morse words (Section~\ref{sec:inf-TM}).
\end{itemize}

Siromoney et al.~\cite{SIROMONEY1994101} generalized the notion of Lyndon words to infinite words, 
and they also showed that any infinite word can be factorized into a unique non-increasing sequence of Lyndon words, finite and/or infinite:
either (1)~the infinite sequence of finite Lyndon words or (2)~the concatenation of a finite sequence of finite Lyndon words and a single infinite Lyndon word.
Melançon fully characterized the Lyndon factorization of the infinite Fibonacci words~\cite{MELANCON1996,MELANCON2000137} 
and Ido and Melançon characterized the Lyndon factorization of the infinite Thue-Morse words~\cite{MELANCON1997}.

Our basic idea for the infinite Thue-Morse words is from \cite{MELANCON1997}.
Our result gives a factorization in type-(1).
However, we have not shown its uniqueness yet.
The difficulty of showing its uniqueness may come from the fact that a natural generalization of Nyldon words for infinite words is not known.
We believe that there is a nice generalization of Nyldon words for infinite cases as well as Lyndon words 
and the generalization may implies the uniqueness of Nyldon factorization for infinite words. \section{Preliminaries}

\subsection*{Strings}
Let $\Sigma$ be an {\em alphabet}.
An element of $\Sigma^*$ is called a {\em string}.
The length of a string $w$ is denoted by $|w|$.
The empty string $\varepsilon$ is the string of length 0.
Let $\Sigma^+$ be the set of non-empty strings,
i.e., $\Sigma^+ = \Sigma^* \setminus \{\varepsilon \}$.
For any strings $x$ and $y$,
let $x \cdot y$ (or sometimes $xy$) denote the concatenation of the two strings.
For a string $w = xyz$, $x$, $y$ and $z$ are called
a \emph{prefix}, \emph{substring}, and \emph{suffix} of $w$, respectively.
They are called a \emph{proper prefix}, a \emph{proper substring}, and a \emph{proper suffix} of $w$
if $x \neq w$, $y \neq w$, and $z \neq w$, respectively.
The $i$-th symbol of a string $w$ is denoted by $w[i]$, where $1 \leq i \leq |w|$.
For a string $w$ and two integers $1 \leq i \leq j \leq |w|$,
let $w[i..j]$ denote the substring of $w$ that begins at position $i$ and ends at
position $j$. For convenience, let $w[i..j] = \varepsilon$ when $i > j$.
Also,
let
$w[..i]=w[1..i]$ and
$w[i..]=w[i..|w|]$ and
$w' = w[1..|w|-1]$.
For a string $w$, let $w^1 = w$ and let $w^k = ww^{k-1}$ for any integer $k \ge 2$.
Also, for a string $w$ and a prefix $x$ of $w$, let $x^{-1}w = w[|x|+1 ..]$ 
and for a string $w$ a suffix $z$ of $w$, let $wz^{-1} = w[..|w|-|z|]$.
A sequence of $k$ strings $w_{1}, \ldots, w_k$ is called a \emph{factorization} of a string $w$ if $w = w_{1} \cdots w_k$.
We call $k$ the size of a \emph{factorization} of a string $w$.
For a binary string $w$, $\overline{w}$ denotes the bit-wise flipped string of $w$ (e.g., $\overline{aab} = bba$ over $\{a,b\}$).
Let $\prec$ denote a (strict) total order on an alphabet $\Sigma$.
A total order $\prec$ on the alphabet induces a total order on the set of strings
called the \emph{lexicographic order} w.r.t.~$\prec$, also denoted as $\prec$, i.e.,
for any two strings $x, y \in \Sigma^*$,
We write $x \prec y$ if and only if $x$ is a proper prefix of $y$,
or, there exists $1 \leq i \leq \min\{|x|,|y|\}$
s.t. $x[1..i-1] = y[1..i-1]$ and $x[i] \prec y[i]$.
We also write $x \preceq y$ if and only if $x \prec y$ or $x = y$.
A mapping $\phi: \Sigma^* \rightarrow \Sigma^*$ is called \emph{morphism} on an alphabet $\Sigma$ 
if $\phi(xy) = \phi(x)\phi(y)$ for any $x, y \in \Sigma^*$. 
Let $\phi^{1}(x) = \phi(x)$ and let $\phi^{k}(x) = \phi(\phi^{k-1}(x))$ for any integer $k \geq 2$.    

\subsection*{Lyndon word and Lyndon factorizations}
A string $w$ is a {\em Lyndon word}~\cite{lyndon54:_burnside} w.r.t. a lexicographic order $\prec$,
if and only if $w \prec w[i..]$ for all $1 < i \leq |w|$, i.e.,
$w$ is lexicographically smaller than all its proper suffixes with respect to $\prec$.
The \emph{Lyndon factorization}~\cite{ChenFL58:_lyndon_factorization_} of a string $w$, denoted by $\LF(w)$,
is a unique factorization $\lambda_1, \ldots, \lambda_m$ of $w$,
such that each $\lambda_i \in \Sigma^+$ is a Lyndon word 
and $\lambda_{i} \succeq \lambda_{i+1}$ for $1 \leq i < m$.
Let $w =  abaabbaabbaab$.
The Lyndon factorization of $w$ is $ab$, $aabb$, $aabb$, $aab$.
Since $\LF(w)$ is unique for any $w$, Lyndon word can be defined recursively as follows: every string $w$ whose length is $1$ is a Lyndon word,
and every string $w$ whose length is longer than $1$ is a Lyndon word if and only if $w$ has no factorization that is a lexicographically non-increasing sequence of Lyndon words of length shorter than $|w|$.

\subsection*{Nyldon word and Nyldon factorizations}
\emph{Nyldon words}~\cite{Nyldon} are defined recursively in a similar way to Lyndon words: 
every string $w$ whose length is $1$ is a Nyldon word, and every string $w$ whose length is longer than $1$ is 
a Nyldon word if and only if $w$ cannot be factorized into shorter Nyldon words.
More precisely, there is no factorization $\gamma_1, \ldots, \gamma_m$ of $w$ that satisfies the following three conditions:
\begin{enumerate}
    \item each $\gamma_i \in \Sigma^+$ is a Nyldon word,
    \item $\gamma_{i} \preceq \gamma_{i+1}$ for all $1 \leq i < m$,
    \item $m \geq 2$.
\end{enumerate}
Notice that there is a lexicographical reversal: $\gamma_{i} \preceq \gamma_{i+1}$ while $\lambda_{i} \succeq \lambda_{i+1}$.
It is known that any string $w$ has a unique factorization that satisfies the first two conditions~\cite{Nyldon_Charlier}.
We call such factorization of a string $w$ the \emph{Nyldon factorization} of $w$ and denote it by $\NF{w}$.

We call a string $x$ a Nyldon proper suffix of $w$ if $x$ is a Nyldon word and $x$ is a proper suffix of $w$,
and a string $y$ the longest Nyldon proper suffix of $w$ if $y$ is a Nyldon proper suffix of $w$ and 
$y$ is longer than any other Nyldon proper suffixes of $w$.
$\lnps{w}$ denotes the longest Nyldon proper suffix of $w$.
For $w = a a b ba$,
$\NF{w}= a, a, b, ba$ and the Nyldon proper suffixes of $w$ are $a$ and $ba$.
Hence, $\lnps{w} = ba$.
Notice that $ba$ is a Nyldon word since the possible factorization $b, a$ is not a Nyldon factorization.

\subsection*{Fibonacci words and Thue-Morse words}
The $k$-th (finite) Fibonacci word $F_k$ over a binary alphabet $\{a, b\}$ is defined as follows:
$F_0 = b$, $F_1 = a$, $F_k = F_{k-1} \cdot F_{k-2}$ for any $k \geq 2$
.
Let $f_k$ be the length of the $k$-th Fibonacci word (i.e., $f_k = |F_k|$).
We also define the infinite Fibonacci words over an alphabet $\{a, b\}$ as $\infF = \lim_{k \to \infty} F_{k}$.

The $k$-th (finite) Thue-Morse word $\TM_k$ over a binary alphabet $\{a, b\}$ is defined as follows:
$\TM_0 = a$,  $\TM_k = \TM_{k-1} \cdot \overline{\TM_{k-1}}$ for any $k \geq 1$
.
It is clear from the definition that $|\TM_k| = 2^k$ holds.
We also define the infinite Thue-Morse words over an alphabet $\{a, b\}$ as $\infTM = \lim_{k \to \infty} \TM_{k}$.
The Thue-Morse morphism $\tmmorph$ over a binary alphabet $\{a, b\}$ is defined as follows:
$\tmmorph(a) = ab$, $\tmmorph(b) = ba$.
We call $\tmmorph$ Thue-Morse morphism since $\infTM = \lim_{n \to \infty} \tmmorph^{n}(a)$. \section{Nyldon Factorization of Fibonacci Words} \label{sec:Fib}
In this section, we fully characterize the Nyldon factorization of the $k$-th Fibonacci word $F_k$.
We start with known properties on Nyldon words.

\begin{lemma}[Theorem~{20} of \cite{Nyldon_Charlier}]\label{lNps}
  Let $w \in \Sigma^+ $, $|w| \geq 2$ and $w = ps$ where $s = \lnps{w}$.
  Then $w$ is a Nyldon word iff $p$ is a Nyldon word and $p \succ s$.
\end{lemma}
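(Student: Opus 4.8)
The plan is to argue by strong induction on $|w|$, carrying along an auxiliary statement: \emph{if $v = p_v q_v$ with $q_v = \lnps{v}$ and $p_v$ is not a Nyldon word, then the last factor of $\NF{p_v}$ is $\preceq q_v$}. I would establish this auxiliary claim first at each length, using the induction hypothesis. Write $\delta$ for the last factor of $\NF{p_v}$; since $p_v$ is not Nyldon, $\delta$ is a Nyldon word and a proper suffix of $p_v$, so $\delta q_v$ is a proper suffix of $v$ (hence strictly shorter than $v$), and its longest Nyldon proper suffix is exactly $q_v$ — every Nyldon proper suffix of $\delta q_v$ is also one of $v$, hence of length at most $|q_v|$, and $q_v$ itself qualifies. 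If we had $\delta \succ q_v$, the ``if'' direction of the lemma applied to $\delta q_v$ (available by induction) would make $\delta q_v$ a Nyldon word, i.e.\ a Nyldon proper suffix of $v$ strictly longer than $\lnps{v}$, a contradiction; hence $\delta \preceq q_v$.

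Given the auxiliary claim, the forward implication is easy. Suppose $p$ is \emph{not} a Nyldon word with $p \succ s$. If $p$ is not Nyldon, the auxiliary claim makes $\NF{p}$ followed by $s$ a lexicographically non-decreasing product of Nyldon words, hence equal to $\NF{w}$ by uniqueness; as it has at least two factors, $w$ is not Nyldon. If $p$ is Nyldon but $p \preceq s$, then $p, s$ is already such a factorization. Contrapositively, if $w$ is Nyldon then $p$ is Nyldon and $p \succ s$.

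For the converse, assume $p$ is Nyldon and $p \succ s$, and suppose for contradiction that $w$ is not Nyldon, $\NF{w} = \gamma_1 \preceq \cdots \preceq \gamma_m$ with $m \geq 2$; I would split on where the cut $w = p \cdot s$ meets the factor boundaries. A cut on a boundary makes $p = \gamma_1\cdots\gamma_i$ a non-decreasing Nyldon product, forcing $i = 1$ (as $p$ is Nyldon) and then $m = 2$ (as $s = \gamma_2\cdots\gamma_m$ is Nyldon), so $p = \gamma_1 \preceq \gamma_2 = s$, contradicting $p \succ s$. A cut strictly inside $\gamma_m$ makes $\gamma_m$ a Nyldon proper suffix of $w$ longer than $s = \lnps{w}$, impossible. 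A cut strictly inside $\gamma_i$ with $2 \leq i \leq m-1$, say $\gamma_i = xy$, makes $xs = \gamma_i\cdots\gamma_m$ a proper suffix of $w$ that is not Nyldon (else a Nyldon proper suffix of $w$ longer than $s$), so by induction (using the auxiliary claim and uniqueness) its Nyldon factorization ends in $\lnps{xs}$; but $s$ is a Nyldon proper suffix of $xs$, so $|\gamma_m| = |\lnps{xs}| \geq |s| \geq |\gamma_m|$, forcing $\gamma_m = s$ and hence $x = \gamma_i\cdots\gamma_{m-1}$, absurd since $x$ is a proper prefix of $\gamma_i$.

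The one remaining case — and what I expect to be the main obstacle — is the cut falling strictly inside the first, smallest factor $\gamma_1$, i.e.\ $\gamma_1 = p\,y$ with $y$ a nonempty proper prefix of $s$. The ``too-long Nyldon proper suffix'' arguments above break down here, because chopping a prefix off $\gamma_1$ no longer yields a proper suffix of $w$. My plan for this case is to apply the lemma recursively to $\gamma_1$ and to $p$ (available by induction) and to bring in coarser structural facts about Nyldon words — for instance that a Nyldon word of length at least two cannot begin with the smallest letter of the alphabet, which over a binary alphabet already forces $|p| \geq 2$ and that $p$ and $s$ share their first letter — so as to descend to a strictly smaller instance of the same configuration; arranging for this descent to terminate cleanly is the delicate point, and is presumably where the cited proof concentrates its effort.
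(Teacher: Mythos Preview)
This lemma is not proved in the paper: it is quoted as Theorem~20 of~\cite{Nyldon_Charlier} and used as a black box throughout. There is no proof here to compare your attempt against.

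On the substance of your attempt: the auxiliary claim, the forward direction, and the first three converse cases are correct. The final case --- the cut lying strictly inside $\gamma_1$ --- is a genuine gap; your ``descent'' sketch is not an argument, and the observation about smallest letters is alphabet-specific and does not by itself close anything. The missing tool is the other cited result, Lemma~\ref{Nyldon-suffix}. If $|p| < |\gamma_1|$ then
\[
|s| = |w|-|p| > |w|-|\gamma_1| = |\gamma_2\cdots\gamma_m| \geq |\gamma_m|,
\]
so $\gamma_m$ is a Nyldon \emph{proper} suffix of the Nyldon word $s$; Lemma~\ref{Nyldon-suffix} gives $\gamma_m \prec s$, and since $p$ is a proper prefix of $\gamma_1$ you get $p \prec \gamma_1 \preceq \gamma_m \prec s$, contradicting $p \succ s$. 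With that one line your argument is complete, and the elaborate case $2\le i\le m-1$ could in fact be absorbed into the same inequality chain.
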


\begin{lemma}[Theorem~{13} of \cite{Nyldon_Charlier}]\label{Nyldon-suffix}
  Let $w$ be a Nyldon word.
  For each Nyldon proper suffix $s$ of $w$, we have $s \prec w$.
\end{lemma}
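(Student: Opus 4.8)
The plan is to argue by strong induction on $|w|$. If $|w| = 1$ there is nothing to prove, since $w$ then has no proper suffix; so assume $|w| \geq 2$ and that the statement holds for every Nyldon word shorter than $w$.

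First I would single out the longest Nyldon proper suffix $t = \lnps{w}$ (which exists because the last letter of $w$ is a length-$1$ Nyldon proper suffix) and write $w = pt$. By Lemma~\ref{lNps}, $p$ is a non-empty Nyldon word and $p \succ t$; in particular $|t| < |w|$. The crucial intermediate claim is that $t \prec w$. This I would obtain from $p \succ t$ by a short case split. Since $p \succ t$, either (a) $t$ is a proper prefix of $p$, or (b) $p$ and $t$ first disagree at some position $i \le \min\{|p|,|t|\}$ with $p[i] \succ t[i]$; the remaining possibilities, $p$ a proper prefix of $t$ or $p = t$, both contradict $p \succ t$. In case (a), $t$ is a proper prefix of $w = pt$, hence $t \prec w$. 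In case (b), $w[1..i-1] = p[1..i-1] = t[1..i-1]$ and $w[i] = p[i] \succ t[i]$ (using $i \le |p|$), so again $t \prec w$.

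Finally, take an arbitrary Nyldon proper suffix $s$ of $w$. By maximality of $t$ we have $|s| \le |t|$, and since $s$ and $t$ are both suffixes of $w$ this forces either $s = t$ or $s$ a proper suffix of $t$. If $s = t$ we are done by the intermediate claim. Otherwise $s$ is a Nyldon proper suffix of the Nyldon word $t$, and $|t| < |w|$, so the induction hypothesis gives $s \prec t$; combining with $t \prec w$ yields $s \prec w$. The argument is routine once Lemma~\ref{lNps} is available; the only spot that needs a moment's care is the intermediate claim $\lnps{w} \prec w$, where one must remember that $\lnps{w}$ may be a proper prefix of the complementary prefix $p$ rather than disagreeing with it in some position.
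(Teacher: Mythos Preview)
The paper does not give its own proof of this lemma; it is imported verbatim as Theorem~13 of Charlier et~al.\ \cite{Nyldon_Charlier}, so there is no in-paper argument to compare against. Your proof is correct: the strong induction on $|w|$, the decomposition $w = p\cdot\lnps{w}$ via Lemma~\ref{lNps}, the case split on how $p \succ \lnps{w}$ to obtain $\lnps{w} \prec w$, and the final descent through the induction hypothesis all go through as written.

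One point worth flagging: you rely on Lemma~\ref{lNps}, which in the source \cite{Nyldon_Charlier} is Theorem~20, while the statement you are proving is their Theorem~13. Since Theorem~13 precedes Theorem~20 there, it is plausible (and in fact the case) that the original proof of Theorem~20 uses Theorem~13, so your derivation would be circular if read against the source. Within the present paper this is harmless, because both lemmas are imported as independent black boxes; but if your goal were a self-contained proof of Theorem~13, you would need either to verify that Lemma~\ref{lNps} can be established without it, or to argue directly from the uniqueness of the Nyldon factorization (which is how Charlier et~al.\ proceed).
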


Our main result in this section is given by Theorem~\ref{thm:NylF_finite-Fib}.
It explains that the $k$-th Fibonacci word is always factorized into the sequence of two Nyldon words.

\begin{theorem}\label{thm:NylF_finite-Fib}
  For every $k \geq 2$, the Nyldon factorization of $F_k$ is $a, F_k[2..f_k]$.
\end{theorem}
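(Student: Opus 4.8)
The plan is to prove the equivalent statement that, for every $k \ge 2$, the word $G_k := F_k[2..f_k] = a^{-1}F_k$ is a Nyldon word. This suffices: $a$ is a Nyldon word, $a \preceq G_k$ since $a$ is the smallest letter and $G_k \neq \varepsilon$, and $F_k = a \cdot G_k$, so $a, G_k$ is a non-decreasing product of Nyldon words, hence equals $\NF{F_k}$ by the uniqueness of the Nyldon factorization. To obtain a recursion that is stable under Lemma~\ref{lNps}, I would also carry along the auxiliary word $H_k := G_k\, a$ (the conjugate of $F_k$ obtained by moving the leading $a$ to the end). From $F_k = F_{k-1}F_{k-2}$ and $F_{k-2} = a\,G_{k-2}$ one gets the parallel recursions $G_k = H_{k-1}\,G_{k-2}$ and $H_k = H_{k-1}\,H_{k-2}$ for $k \ge 4$, with $G_2 = b$, $G_3 = ba$, $H_2 = ba$, $H_3 = baa$.

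I would then prove, by simultaneous strong induction on $k$, the bundled statement $C(k)$: (i) $G_k$ and $H_k$ are Nyldon words; (ii) the Nyldon proper suffixes of $G_k$ are exactly the words $G_j$ with $2 \le j \le k-2$ and $j \equiv k \pmod{2}$, together with $a$ when $k$ is odd; (iii) the Nyldon proper suffixes of $H_k$ are exactly the words $H_j$ with $2 \le j \le k-2$ and $j \equiv k \pmod{2}$, together with $a$. The base cases $k \in \{2,3\}$ are checked directly. For the inductive step ($k \ge 4$), I would first establish (ii) and (iii) using only the induction hypothesis. Since $G_{k-2}$ is a suffix of $G_k$, a proper suffix of $G_k$ of length at most $|G_{k-2}|$ is a suffix of $G_{k-2}$, so its Nyldon-ness is governed by $C(k-2)$, accounting for $G_{k-2}$ itself and the Nyldon proper suffixes of $G_{k-2}$. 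A proper suffix $s$ of $G_k$ with $|s| > |G_{k-2}|$ can be written $s = y\,G_{k-2}$ with $y$ a non-empty proper suffix of $H_{k-1}$ (using $G_k = H_{k-1}G_{k-2}$), and I claim such an $s$ is never Nyldon: writing $\NF{y} = \delta_1, \dots, \delta_r$, the last factor $\delta_r$ is a Nyldon proper suffix of $H_{k-1}$, so by $C(k-1)$ it is either $a$ or some $H_j$ with $2 \le j \le k-3$; in either case $\delta_r \preceq G_{k-2}$, because $a$ is the smallest letter and each such $H_j$ is a prefix of $G_{k-2}$ (which amounts to $F_j\,a$ being a prefix of $F_{k-2}$, true since $F_{j+1} = F_j F_{j-1}$ and $F_{j-1}$ begins with $a$). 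Hence $\delta_1, \dots, \delta_r, G_{k-2}$ is a non-decreasing product of at least two Nyldon words, so $s$ is not Nyldon. The same argument with $H_{k-2}$ in place of $G_{k-2}$ (noting $G_{k-2}$ is a prefix of $H_{k-2}$) proves (iii); in particular $\lnps{G_k} = G_{k-2}$ and $\lnps{H_k} = H_{k-2}$.

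It then remains to prove (i). Applying Lemma~\ref{lNps} to $G_k = H_{k-1} \cdot G_{k-2}$ with $\lnps{G_k} = G_{k-2}$: the complementary prefix $H_{k-1}$ is Nyldon by $C(k-1)$, and $H_{k-1} \succ G_{k-2}$ because $G_{k-2}$ is a proper prefix of $H_{k-1}$; hence $G_k$ is Nyldon. Similarly, applying Lemma~\ref{lNps} to $H_k = H_{k-1} \cdot H_{k-2}$ with $\lnps{H_k} = H_{k-2}$: the prefix $H_{k-1}$ is Nyldon by $C(k-1)$ and $H_{k-1} \succ H_{k-2}$ because $H_{k-2}$ is a proper prefix of $H_{k-1}$; hence $H_k$ is Nyldon. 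This closes the induction, and part (i) is exactly the statement we wanted.

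I expect the ``long-suffix'' part of (ii)--(iii) to be the main obstacle: proving that no proper suffix of $G_k$ (resp.\ $H_k$) longer than $G_{k-2}$ (resp.\ $H_{k-2}$) is a Nyldon word. This is where the combinatorics of Fibonacci words genuinely enters, and it is the reason one must set up the stronger, bundled induction -- knowing merely that $G_j$ is Nyldon for $j < k$ is not enough; one needs the full list of Nyldon proper suffixes of $H_{k-1}$, together with the prefix relations among the words $F_j, G_j, H_j$ that force the last Nyldon factor of an arbitrary suffix of $H_{k-1}$ to be small enough to be placed in front of $G_{k-2}$ or $H_{k-2}$.
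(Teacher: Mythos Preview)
Your proof is correct and follows essentially the same strategy as the paper's Lemma~\ref{finbonaccisuffix_nyldon} (where your $G_k$ is the paper's $H_k$): carry the companion word $G_k\,a$ alongside $G_k$, establish $\lnps{G_k}=G_{k-2}$ and $\lnps{G_k a}=G_{k-2}a$ using the prefix relation $G_j a \preceq G_{k-2}$ for $j\le k-3$, and conclude via Lemma~\ref{lNps}. The only difference is bookkeeping: you track the complete list of Nyldon proper suffixes and rule out long candidates by exhibiting an explicit nontrivial Nyldon factorization, whereas the paper tracks only $\lnps$ and reaches the same contradiction through a shortest-$x$ argument combined with Lemma~\ref{Nyldon-suffix}.
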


It is easy to see that the product of strings $a \cdot F_k[2..f_k]$ represents $F_k$, and $a \prec F_k[2..f_k]$ also holds since $F_k[2] = b$.
Hence, we only have to show $F_k[2..f_k]$ to be a Nyldon word in order to prove Theorem~\ref{thm:NylF_finite-Fib}.
We prove it in Lemma~\ref{finbonaccisuffix_nyldon}.
For convenience, let $H_k = F_k[2..f_k]$ for every $k \geq 2$, and $h_k$ denote the length of $H_k$ (i.e., $h_k = f_k - 1)$. 

\begin{lemma}\label{finbonaccisuffix_nyldon}
  (i)~For every $k \geq 2$, $H_k$ and $H_k \cdot a$ are Nyldon words.
  (ii)~For every $k \geq 4$, $\lnps{H_k} = H_{k-2}$ and $\lnps{H_k \cdot a} = H_{k-2} \cdot a$.
\end{lemma}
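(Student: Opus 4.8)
The plan is to prove (i) and (ii) simultaneously by strong induction on $k$, handling (ii) for $k$ before (i) for $k$ inside the inductive step. First I would dispose of the base cases $k\in\{2,3,4\}$ by direct inspection: each of $H_2=b$, $H_2a=ba$, $H_3=ba$, $H_3a=baa$, $H_4=baab$, $H_4a=baaba$ admits only finitely many splittings into two or more parts, none of which is a lexicographically non-decreasing sequence of shorter Nyldon words, so all six are Nyldon; and one reads off $\lnps{H_4}=b=H_2$ and $\lnps{H_4a}=ba=H_2a$.

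For the inductive step $k\ge 5$ I would first record the Fibonacci identities $H_k=H_{k-1}\,a\,H_{k-2}$, $H_ka=(H_{k-1}a)(H_{k-2}a)$, and $H_{k-1}=H_{k-2}F_{k-3}$. In particular $H_{k-2}$ is a proper prefix of $H_{k-1}$, so $H_{k-1}a\succ H_{k-2}$ and $H_{k-1}a\succ H_{k-2}a$, and, using also $F_{k-2}=F_{k-3}F_{k-4}$, one gets $H_{k-3}a\preceq H_{k-2}$. For (ii) at $k$, the lower bounds $|\lnps{H_k}|\ge h_{k-2}$ and $|\lnps{H_ka}|\ge h_{k-2}+1$ are immediate, since $H_{k-2}$ (resp.\ $H_{k-2}a$) is Nyldon by hypothesis (i) and is a proper suffix of $H_k$ (resp.\ of $H_ka$). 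For the matching upper bound I would show that every proper suffix $s$ of $H_k$ with $|s|>h_{k-2}$, and likewise every proper suffix $s$ of $H_ka$ with $|s|>h_{k-2}+1$, is not Nyldon. If $|s|=f_{k-2}$ (resp.\ $f_{k-2}+1$) then $s=F_{k-2}=a\cdot H_{k-2}$ (resp.\ $s=F_{k-2}a=a\cdot(H_{k-2}a)$), and since $a$ precedes every nonempty word and $a,H_{k-2},H_{k-2}a$ are Nyldon, this is a non-decreasing product of two strictly shorter Nyldon words. Otherwise $s=u\,F_{k-2}=(ua)\,H_{k-2}$ (resp.\ $s=u\,F_{k-2}a=(ua)(H_{k-2}a)$) for some nonempty proper suffix $u$ of $H_{k-1}$; writing $\NF{ua}=\gamma_1,\dots,\gamma_m$, the sequence $\gamma_1,\dots,\gamma_m,H_{k-2}$ (resp.\ $\gamma_1,\dots,\gamma_m,H_{k-2}a$) is a factorization of $s$ into strictly shorter Nyldon words, and it is non-decreasing as soon as $\gamma_m\preceq H_{k-2}$ (which also gives $\gamma_m\preceq H_{k-2}a$, as $H_{k-2}$ is a proper prefix of $H_{k-2}a$).

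So everything reduces to the key claim $\gamma_m\preceq H_{k-2}$. Here $\gamma_m$ is a Nyldon \emph{proper} suffix of $H_{k-1}a$ (being a suffix of $ua$, itself a proper suffix of $H_{k-1}a$), so by hypothesis (ii) its length is at most $|\lnps{H_{k-1}a}|=|H_{k-3}a|=f_{k-3}$; since $H_{k-1}a=H_{k-2}F_{k-3}a$, the suffix of $H_{k-1}a$ of length $f_{k-3}$ is exactly $H_{k-3}a$, hence $\gamma_m$ is a Nyldon suffix of $H_{k-3}a$. As $H_{k-3}a$ is Nyldon by hypothesis (i), Lemma~\ref{Nyldon-suffix} gives $\gamma_m\preceq H_{k-3}a\preceq H_{k-2}$. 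With (ii) established, write $H_k=(H_{k-1}a)\cdot\lnps{H_k}$ and $H_ka=(H_{k-1}a)\cdot\lnps{H_ka}$; since $H_{k-1}a$ is Nyldon by hypothesis (i) and $H_{k-1}a\succ H_{k-2}$, $H_{k-1}a\succ H_{k-2}a$, Lemma~\ref{lNps} yields that $H_k$ and $H_ka$ are Nyldon, closing the induction. I expect the delicate point to be exactly this key claim --- squeezing $\gamma_m$ inside $H_{k-3}a$ via the length bound --- together with checking that all the length/index side-conditions ($k-3\ge 2$, nonemptiness and properness of $u$, strict decrease of the factor lengths, $|H_k|\ge 2$) hold at every point where an induction hypothesis or Lemma~\ref{lNps}/Lemma~\ref{Nyldon-suffix} is invoked.
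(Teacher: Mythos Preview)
Your proof is correct and follows essentially the same approach as the paper: a simultaneous induction on $k$, with the key step being that any Nyldon proper suffix of $H_{k-1}a$ has length at most $|H_{k-3}a|$ (by the inductive hypothesis (ii)), hence is a Nyldon suffix of $H_{k-3}a$ and therefore $\preceq H_{k-3}a\preceq H_{k-2}$ by Lemma~\ref{Nyldon-suffix} and the prefix relation. The only cosmetic difference is that the paper packages this as a proof by contradiction (take the shortest $x$ with $xH_{k-2}$ Nyldon, so $\lnps{xH_{k-2}}=H_{k-2}$, and use Lemma~\ref{lNps} to get $x$ Nyldon with $x\succ H_{k-2}$, then derive $x\preceq H_{k-3}a\preceq H_{k-2}$), whereas you argue directly by exhibiting, for every longer suffix $s$, the non-decreasing Nyldon factorization $\NF{ua},\,H_{k-2}$; both routes hinge on the identical squeeze.
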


\begin{proof}
  We prove the two statements by induction on $k$.
  
  \vskip.5\baselineskip
  \noindent{\bf Base case.}
  For $k = 2, 3$ of (i), we can simply check that 
  $H_2 = b$, $H_2 \cdot a = ba$, $H_3 = ba$, and $H_3 \cdot a = baa$ are Nyldon word.
  For $k = 4$, we can also check $\lnps{H_4} = \lnps{baab}= b = H_2$. Hence, $H_4$ is a Nyldon word by the fact that $H_4 \cdot (\lnps{H_4})^{-1} = baa = H_3$ is a Nyldon word and Lemma~\ref{lNps}.
  Moreover, we can also check that $\lnps{H_4 \cdot a} = \lnps{baaba}= ba = H_2 \cdot a$ holds and $H_4 \cdot a = baaba$ is a Nyldon word.
  Thus, the two statements also hold for $k=4$.

  \vskip.5\baselineskip
  \noindent {\bf Induction step.}
  Suppose that the statement holds for any $k$ satisfying $k < j$ for some $j \geq 5$.
  We show that the statements hold for $k = j$ (see also Fig.~\ref{fig:sketch}).
Firstly, we show that $\lnps{H_j} = H_{j-2}$.
  By the definitions of $F_k$ and $H_k$, we can write $H_j = H_{j-1} \cdot a \cdot H_{j-2}$
  and $H_{j-2}$ is a Nyldon word by induction hypothesis.
  Assume on the contrary that there exists a longer Nyldon suffix $x \cdot H_{j-2}$ of $H_j$ for some non-empty string $x$.
  Here, $x$ is also assumed to be the shortest such string.
  Namely, $H_{j-2}$ is the longest Nyldon proper suffix of $x \cdot H_{j-2}$.
  By Lemma~\ref{lNps}, $x$ must be a Nyldon word that satisfies $x \succ H_{j-2}$.
  Moreover, $x$ is a suffix of $H_{j-1} \cdot a$.
  Since $\lnps{H_{j-1} \cdot a} = H_{j-3} \cdot a$ by induction hypothesis, $x$ is also a suffix of $H_{j-3} \cdot a$.
  We can also see that $H_{j-3} \cdot a$ is a prefix $H_{j-2}$ by the definitions of $F_k$ and $H_k$.
  These facts and Lemma~\ref{Nyldon-suffix} imply that $x \preceq H_{j-3} \cdot a \prec H_{j-2}$.
  This contradicts the inequality $x \succ H_{j-2}$.
  Hence, $\lnps{H_j} = H_{j-2}$ holds.
  Moreover, we can see that $H_j$ is a Nyldon word by Lemma~\ref{lNps} 
  since $\lnps{H_j} = H_{j-2}$, $H_{j-1} \cdot a$ is a Nyldon (by induction hypothesis), and $H_{j-2}$ is a prefix of $H_{j-1} \cdot a$ (i.e., $H_{j-1} \cdot a \succ H_{j-2}$).
  In a similar way, we can also show that $\lnps{H_j \cdot a} = H_{j-2} \cdot a$ holds and $H_j \cdot a$ is a Nyldon word.
  Therefore, the statement holds for every $k$.
\end{proof}

\begin{figure}[t]
    \centering
    \includegraphics[keepaspectratio,width=\linewidth]{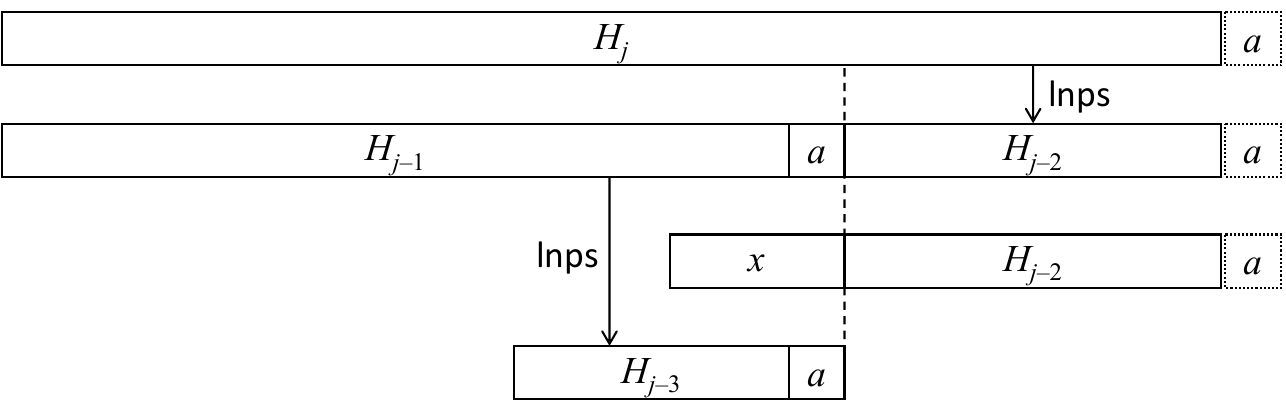}
    \caption{Illustration for proof of Lemma~\ref{finbonaccisuffix_nyldon}.}
    \label{fig:sketch}
\end{figure}

 \section{Nyldon Factorization of Thue-Morse Words}\label{sec:finite-TM}

In this section, we fully characterize the Nyldon factorization of the $k$-th Thue-Morse word $\TM_k$. We show that the factorization can be characterized recursively. Note that $w' = w[1..|w|-1]$ for a string $w$.

\begin{theorem}\label{nyldon_fac_of_tmk}
  Let $\NF{\TM_{2k-1}} = x_1, x_2, \ldots, x_{c_k}$. For every $n \geq 4$,
  \begin{equation*}
    \NF{\TM_n} =
    \begin{cases}
x_1, \ldots, x_{c_{k-1}}, x_{c_k} \cdot {\overline{\TM_{2k-2}'}}, b \cdot \TM_{2k-2} & \text{if $n=2k$,} \\
      x_1, \ldots, x_{c_{k-1}}, x_{c_k} \cdot {\overline{\TM_{2k-2}'}}, b \cdot \TM_{2k-2} \cdot \overline{\TM_{2k}} & \text{if $n=2k+1$.}
    \end{cases}
  \end{equation*}
\end{theorem}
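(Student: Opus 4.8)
The plan is to prove both identities together by strong induction on $n$: assuming the claimed form of $\NF{\TM_m}$ for all $m<n$, I verify that the sequence displayed for $n$ is \emph{a} Nyldon factorization of $\TM_n$ and then invoke its uniqueness. Write $\NF{\TM_{2k-1}}=x_1,\dots,x_{c_k}$ as in the statement, and let $R_n$ be the candidate sequence for $\TM_n$. Three things must be checked: (a)~the factors of $R_n$ concatenate to $\TM_n$; (b)~every factor of $R_n$ is a Nyldon word; (c)~the factors of $R_n$ are $\prec$-non-decreasing. The small cases $n=4,5$ (with the value $\NF{\TM_3}=a,b,ba,baab$ that the recursion starts from) are finite computations and form the base.

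Part~(a) follows by expanding $\TM_m=\TM_{m-1}\overline{\TM_{m-1}}$: since $2k-2$ is even, $\TM_{2k-2}$ ends with $a$, so $\overline{\TM_{2k-2}'}\cdot b=\overline{\TM_{2k-2}}$; recombining $x_{c_k}$ with $x_1\cdots x_{c_k-1}=\TM_{2k-1}(x_{c_k})^{-1}$, the factors of $R_{2k}$ concatenate to $\TM_{2k-1}\,\overline{\TM_{2k-2}}\,\TM_{2k-2}=\TM_{2k-1}\overline{\TM_{2k-1}}=\TM_{2k}$, and those of $R_{2k+1}$ to $\TM_{2k}\overline{\TM_{2k}}=\TM_{2k+1}$. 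For~(c), the chain $x_1\preceq\cdots\preceq x_{c_k}\preceq x_{c_k}\overline{\TM_{2k-2}'}$ is immediate from the induction hypothesis and the prefix relation, so the only genuine inequality is $x_{c_k}\overline{\TM_{2k-2}'}\prec b\TM_{2k-2}$; here I would use the hypothesis to write $x_{c_k}=b\TM_{2k-4}\overline{\TM_{2k-2}}$ (valid for $k\ge3$; $k=2$ is a base case), observe that both words share the prefix $b\,\TM_{2k-4}\,\overline{\TM_{2k-4}}$, and note that immediately after it the first word continues with $a$ (the initial letter of $\TM_{2k-4}$) while the second continues with $b$ (the initial letter of $\overline{\TM_{2k-4}}$). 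Finally, the odd case is almost free once the even case is settled: $\TM_{2k+1}=\TM_{2k}\overline{\TM_{2k}}$, and $R_{2k+1}$ is obtained from $R_{2k}$ only by lengthening the last factor $b\TM_{2k-2}$ to $b\TM_{2k-2}\overline{\TM_{2k}}$, so once~(b) is known for this longer word, (a) and (c) carry over.

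Part~(b) is the core: the ``new'' factors $b\TM_{2k-2}$, $x_{c_k}\overline{\TM_{2k-2}'}$ and $b\TM_{2k-2}\overline{\TM_{2k}}$ must be shown Nyldon, which I would do with Lemma~\ref{lNps} exactly as in Lemma~\ref{finbonaccisuffix_nyldon}. For $b\TM_{2k-2}$ the claim is $\lnps{b\TM_{2k-2}}=b\TM_{2k-4}$; granting it, the quotient $b\TM_{2k-2}\cdot(b\TM_{2k-4})^{-1}=b\,\TM_{2k-3}\,\overline{\TM_{2k-4}'}$ has $b\TM_{2k-4}$ as a proper prefix (so it is $\succ b\TM_{2k-4}$) and is itself Nyldon by one further application of Lemma~\ref{lNps} (its longest Nyldon proper suffix is $\overline{\TM_{2k-4}}\,\overline{\TM_{2k-4}'}$, and the quotient $b\TM_{2k-4}$ is Nyldon by the induction hypothesis), so $b\TM_{2k-2}$ is Nyldon. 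The other two new factors are handled in the same pattern---pin down the longest Nyldon proper suffix, peel it off with Lemma~\ref{lNps}, and recognise the quotient as a word whose Nyldon property is already available from the induction hypothesis---and to keep this manageable I would fold the two factorization formulas together with ``Nyldon'' and the exact value of $\lnps$ for every word of this Thue-Morse family into a single simultaneous induction on $k$.

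The step I expect to be the main obstacle is precisely the verification of these $\lnps$ identities---showing that the words in question have \emph{no} Nyldon proper suffix longer than claimed. As in the proof of Lemma~\ref{finbonaccisuffix_nyldon}, a hypothetical longer Nyldon suffix factors as $y\cdot s$ with $s$ a shorter Nyldon suffix and $y$ a nonempty Nyldon word, forcing $y\succ s$ by Lemma~\ref{lNps}; on the other hand $y$ is a suffix of a shift of $\TM_{2k-3}$ or $\overline{\TM_{2k-3}}$, and the strong combinatorial rigidity of the Thue-Morse word (its limited set of factors, e.g.\ overlap-freeness, and the scarcity of positions where a given short block can occur) together with Lemma~\ref{Nyldon-suffix} forces $y$ to be a prefix of $s$, hence $y\preceq s$---a contradiction. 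The individual sub-arguments mirror the Fibonacci case; the extra cost is purely bookkeeping, because several interleaved families of Thue-Morse-related words must be tracked simultaneously.
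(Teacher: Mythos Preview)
Your plan is the paper's plan: check concatenation, monotonicity, and Nyldon-ness of the new factors, the last via Lemma~\ref{lNps} inside a simultaneous induction tracking $\lnps$ for a whole family of Thue--Morse substrings. The paper carries this out by introducing twelve explicit families $t_1(k),\dots,t_{12}(k)$ (Section~\ref{sec:inf-TM}), identifying the three new factors as $X_k=t_{12}(k-2)$, $Y_k=t_4(k-1)$, $Z_k=t_{10}(k-1)$, and proving in one large induction (Lemma~\ref{lem:many-Nyldon}) that every $t_\ell(k)$ is Nyldon with $\lnps{t_\ell(k)}$ again in the same collection; your direct verification of $X_k\prec Y_k$ via the common prefix $b\,\TM_{2k-4}\,\overline{\TM_{2k-4}}$ is actually cleaner than the paper's forward reference.

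One concrete correction to your illustrative computation: for the word $b\,\TM_{2k-3}\,\overline{\TM_{2k-4}'}$ (which is $t_3(k-2)$), the longest Nyldon proper suffix is \emph{not} $\overline{\TM_{2k-4}}\,\overline{\TM_{2k-4}'}$ once $k\ge 4$; it is the strictly longer word $b\,\TM_{2k-6}\,\overline{\TM_{2k-4}}\,\overline{\TM_{2k-4}'}=t_{12}(k-3)$, and the corresponding quotient is $t_3(k-3)$, not $b\,\TM_{2k-4}$. So ``one further application of Lemma~\ref{lNps}'' does not close the loop; peeling $\lnps$ keeps producing new shapes until you reach a closed system---in the paper, the twelve $t_\ell$'s. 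This is not a flaw in your strategy (you already anticipate a simultaneous induction over several families), but it does mean the bookkeeping you flag as ``the main obstacle'' is heavier than your example suggests: you will need roughly a dozen interlocking $\lnps$ identities, each proved by the contradiction template you describe, rather than two or three.
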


\begin{proof} 
We show this statement by induction on $k$.
First, we can check the statement holds for $k=2$ by the definition of the Nyldon factorization as follows.
\begin{itemize}
  \item $\NF{\TM_3} = a, b, ba, baab$
  \item $\NF{\TM_4} = a, b, ba, baab \cdot baa, b \cdot abba$
  \item $\NF{\TM_5} = a, b, ba, baab \cdot baa, b \cdot abba \cdot baababbaabbabaab$
\end{itemize}
For convenience, we define
\[
  X_k = x_{c_k} \cdot {\overline{\TM_{2k-2}'}}, 
  Y_k = b \cdot \TM_{2k-2}, \text{ and }
  Z_k = b \cdot \TM_{2k-2} \cdot \overline{\TM_{2k}}.
\]
Suppose that the statement holds for $k = j-1$ for some $j \geq 3$.
Now we prove the statement also holds for $k = j$.
To prove the factorization is the Nyldon factorization of the string, we will prove the following three conditions.
\begin{enumerate}
  \item $\TM_{2j} = x_{1} \cdot x_{2} \cdots x_{c_{j-1}} \cdot X_j \cdot Y_j$ 
        and $\TM_{2j+1} = x_{1} \cdot x_{2} \cdots x_{c_{j-1}} \cdot X_j \cdot Z_j$.
  \item $x_{1} \prec x_{2} \prec \ldots \prec x_{c_{j-1}} \prec X_j \prec Y_j$ 
        and $x_{1} \prec x_{2} \prec \ldots \prec x_{c_{j-1}} \prec X_j \prec Z_j$. \item $x_i~(1 \leq i \leq c_{j})$, $X_j$, $Y_j$, $Z_j$ are Nyldon words.
\end{enumerate}

The first property can be shown by induction hypothesis as follows.
\begin{align*}
  x_1 \cdot x_2 \cdots  x_{c_{j-1}} \cdot X_j \cdot Y_j &= x_1 \cdot x_2 \cdots  x_{c_{j-1}} \cdot  x_{c_{j}}  \cdot {\overline{\TM_{2j-2}'}} \cdot b \cdot \TM_{2j-2}\\
                                                                        &= \TM_{2j-1} \cdot {\overline{\TM_{2j-2}'}} \cdot b \cdot \TM_{2j-2}\\
                                                                        &= \TM_{2j-1} \cdot {\overline{\TM_{2j-1}}} = \TM_{2j}.
\end{align*} 
\begin{align*}
  x_1 \cdot x_2 \cdots x_{c_{j-1}} \cdot X_j \cdot Z_j  &= x_1 \cdot x_2 \cdots  x_{c_{j-1}} \cdot  x_{c_{j}}  \cdot {\overline{\TM_{2j-2}'}} \cdot b \cdot \TM_{2j-2} \cdot \overline{\TM_{2j}}\\
                                                                        &= \TM_{2j} \cdot \overline{\TM_{2j}} = \TM_{2j+1}. 
\end{align*} 

Next, we consider the second condition.
By induction hypothesis, $x_1 \prec x_2 \prec \cdots \prec x_{c_j}$ holds.
We can see that $x_{c_j} \prec X_j$ holds since $x_{c_j}$ is a prefix of $X_j$.
In the rest of this part, we show $X_j \prec Y_j \prec Z_j$ instead of proving $X_j \prec Y_j$ and $X_j \prec Z_j$.
Since $Y_j$ is a prefix of $Z_j$, $Y_j \prec Z_j$ holds.
When $j = 2$, we have 
$X_j = b \cdot aab \cdot baa \prec b \cdot abba = Y_j$.
When $j > 2$, we have
$X_j = x_{c_j} \cdot {\overline{\TM_{2j-2}'}}
  = b \cdot \TM_{2j-4} \cdot \overline{\TM_{2j-2}} \cdot {\overline{\TM_{2j-2}'}}$
by induction hypothesis.
To obtain $X_j \prec Y_j$ for every $j > 2$, we can use a part of results in the next section
(e.g., $X_j = \TMXII{j-2}$ and $Y_j = \TMIV{j-1}$ in Lemma~\ref{k_prec_k1}).
Hence $X_j \prec Y_j \prec Z_j$ holds.

Finally, we consider the third condition.
By induction hypothesis, $x_1, x_2, \ldots,$ $x_{c_{j-1}}$ are Nyldon words. In this part, we also use a part of results in the next section.
We can see that $X_j = \TMXII{j-2}$, $Y_j = \TMIV{j-1}$, and $Z_j = \TMX{j-1}$.
These words are Nyldon words by Lemma~\ref{lem:Nyldon-prop} which will be shown in the next section.

Therefore, the given factorizations are Nyldon factorizations of $\TM_{2k}$ and $\TM_{2k+1}$ for every $k \geq 3$.
\end{proof}
 \section{Factorization for infinite Thue-Morse Word}\label{sec:inf-TM}

In this section, we show that there exists an infinite non-decreasing sequence of Nyldon words 
that is a factorization of the infinite Thue-Morse words.
First, we define the sequence in two ways and show that they are equivalent (Lemma~\ref{lem:Nyl-sequences}).

\begin{definition}
  Let $w_1 = a$, $w_2 = b$, $w_3 = ba$, $w_4 = baabbaa$, 
  and $w_n = b \cdot \TM_{2n-8} \cdot \overline{\TM_{2n-6}} \cdot \overline{\TM_{2n-6}'}$ for every $n \geq 5$.
\end{definition}

\begin{definition}\label{def:infTM-morphic-fact}
  Let $\tilde{w}_1 = a$, $\tilde{w}_2 = b$, $\tilde{w}_3 = ba$, $\tilde{w}_4 = baabbaa$, 
$\tilde{w}_5 = b \cdot \TM_{2} \cdot \overline{\TM_{4}} \cdot \overline{\TM_{4}'}$, 
  and $\tilde{w}_{n+1} = (baa)^{-1}\tmmorph^{2}(\tilde{w}_{n}) \cdot baa$ for every $n \geq 5$.
\end{definition}

\begin{lemma} \label{lem:Nyl-sequences}
  For every $n \geq 1$, $w_n = \tilde{w}_n$.
\end{lemma}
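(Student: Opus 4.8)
The plan is to prove $w_n=\tilde w_n$ by induction on $n$. For $n\le 5$ the two definitions agree by inspection, since $\tilde w_5 = b\cdot\TM_2\cdot\overline{\TM_4}\cdot\overline{\TM_4'}$ matches the formula $w_5 = b\cdot\TM_{2}\cdot\overline{\TM_{4}}\cdot\overline{\TM_{4}'}$ obtained by substituting $n=5$ into $w_n = b\cdot\TM_{2n-8}\cdot\overline{\TM_{2n-6}}\cdot\overline{\TM_{2n-6}'}$. So the base cases are routine. For the inductive step, I assume $w_n=\tilde w_n$ for some $n\ge 5$ and must show $w_{n+1}=\tilde w_{n+1}$. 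By Definition~\ref{def:infTM-morphic-fact}, $\tilde w_{n+1}=(baa)^{-1}\tmmorph^{2}(\tilde w_n)\cdot baa=(baa)^{-1}\tmmorph^{2}(w_n)\cdot baa$ using the induction hypothesis, so it suffices to show
\[
  (baa)^{-1}\tmmorph^{2}(w_n)\cdot baa = w_{n+1},
\]
i.e., that applying $\tmmorph^2$ to $w_n$ and then rotating the prefix $baa$ around to the end yields exactly $w_{n+1}$.

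The key tool is the action of the Thue--Morse morphism on finite Thue--Morse words: since $\tmmorph(\TM_k)=\TM_{k+1}$ for all $k\ge 0$, we have $\tmmorph^2(\TM_k)=\TM_{k+2}$, and likewise $\tmmorph^2(\overline{\TM_k})=\overline{\TM_{k+2}}$ because $\tmmorph$ commutes with bit-flipping (as $\overline{\tmmorph(a)}=ba=\tmmorph(b)$ and symmetrically). The one point requiring care is the ``primed'' factor $\overline{\TM_{2n-6}'}$, which is $\overline{\TM_{2n-6}}$ with its last symbol deleted; I need to understand $\tmmorph^2$ applied to such a truncated word. Here I would use that $\TM_{k}$ ends in a known two-letter suffix (it ends with $ba$ when $k$ is odd and with $ab$ when $k$ is even, or more simply $\TM_k$ for $k\ge 1$ ends in $\overline{\TM_{k-1}[|\TM_{k-1}|]}$), so deleting the last letter of $\TM_{2n-6}$ and then applying $\tmmorph^2$ differs from $\tmmorph^2(\TM_{2n-6})=\TM_{2n-4}$ only by deleting a known short suffix; concretely $\tmmorph^2(x') = \tmmorph^2(x)\cdot s^{-1}$ where $s=\tmmorph^2(x[|x|])$ is a word of length $4$. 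Tracking these truncations precisely is the main bookkeeping obstacle.

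Carrying this out: write $w_n = b\cdot\TM_{2n-8}\cdot\overline{\TM_{2n-6}}\cdot\overline{\TM_{2n-6}'}$. Apply $\tmmorph^2$ factor-by-factor. We get $\tmmorph^2(b)=\tmmorph(ba)=baab$, $\tmmorph^2(\TM_{2n-8})=\TM_{2n-4}$, $\tmmorph^2(\overline{\TM_{2n-6}})=\overline{\TM_{2n-4}}$, and $\tmmorph^2(\overline{\TM_{2n-6}'})=\overline{\TM_{2n-4}}$ with a length-$4$ suffix removed. Concatenating, $\tmmorph^2(w_n)$ is $baab$ followed by $\TM_{2n-4}\,\overline{\TM_{2n-4}}\,(\text{a truncation of }\overline{\TM_{2n-4}})$; note $\TM_{2n-4}\cdot\overline{\TM_{2n-4}}=\TM_{2n-3}$. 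Then I strip the prefix $baa$ (the prefix $baab$ of $\tmmorph^2(w_n)$ starts with $baa$, leaving a leading $b$) and append $baa$ to the end. On the left this turns $baab\cdots$ into $b\cdots$; the leading $b$ together with the next block should reorganize into $b\cdot\TM_{2n-6}\cdot\overline{\TM_{2n-4}}$, and on the right the appended $baa$ exactly fills in the four-letter hole left by the truncation minus one symbol, producing $\overline{\TM_{2n-4}'}$. Matching this against $w_{n+1}=b\cdot\TM_{2(n+1)-8}\cdot\overline{\TM_{2(n+1)-6}}\cdot\overline{\TM_{2(n+1)-6}'}=b\cdot\TM_{2n-6}\cdot\overline{\TM_{2n-4}}\cdot\overline{\TM_{2n-4}'}$ completes the induction. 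The verification that the prefix/suffix surgery lines up with the recursion on the Thue--Morse indices — in particular that $2n-8$ shifts to $2n-6$ and $2n-6$ shifts to $2n-4$ under $\tmmorph^2$, consistently with the $n\mapsto n+1$ reindexing — is exactly the step I expect to be the most delicate, and I would double-check it by also verifying the length identity $|w_{n+1}|=4|w_n|$ against $|w_n| = 1 + 2^{2n-8} + 2\cdot 2^{2n-6} - 1 = 2^{2n-8} + 2^{2n-5}$, which must hold since $\tmmorph^2$ quadruples lengths and the $baa$-rotation preserves length.
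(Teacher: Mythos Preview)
Your approach is exactly the paper's: induction on $n$, with base cases $n\le 5$ by inspection and the inductive step carried out by applying $\tmmorph^2$ to $w_n$ factor-by-factor and then performing the $baa$-rotation. The paper compresses the whole inductive computation into a single displayed chain of equalities, whereas you spell out the mechanism (in particular the handling of the truncated factor), which is fine.

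One index slip to clean up: you write $\tmmorph^2(\TM_{2n-8})=\TM_{2n-4}$, but $\tmmorph^2$ increases the subscript by $2$, so this should be $\TM_{2n-6}$; the subsequent sentence ``note $\TM_{2n-4}\cdot\overline{\TM_{2n-4}}=\TM_{2n-3}$'' is then spurious (the actual middle block is $\TM_{2n-6}\cdot\overline{\TM_{2n-4}}$, which does not collapse). You evidently catch yourself a few lines later when you correctly assemble $b\cdot\TM_{2n-6}\cdot\overline{\TM_{2n-4}}\cdot\overline{\TM_{2n-4}'}=w_{n+1}$, and your length sanity check $4|w_n|=|w_{n+1}|$ is consistent with the corrected indices; just fix the earlier lines so the computation is internally consistent.
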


\begin{proof}
  We prove it by induction on $n$.
  We can check that the statement holds for every $1 \leq n \leq 5$ by the definitions.
  Suppose that $w_k = \tilde{w}_k$ holds for some $k \geq 5$.
  Then, we have
  \begin{align*}
    \tilde{w}_{k+1} 
    &= (baa)^{-1} \tmmorph^{2}(\tilde{w}_k) \cdot baa = (baa)^{-1} \tmmorph^{2}(w_k) \cdot baa\\
    &= (baa)^{-1} \tmmorph^{2}(b \cdot \TM_{2k-8} \cdot \overline{\TM_{2k-6}} \cdot {\overline{\TM_{2k-6}'}}) \cdot baa\\
&= b \cdot \TM_{2k-6} \cdot \overline{\TM_{2k-4}} \cdot \overline{\TM_{2k-4}'} = w_{k+1}.
  \end{align*}
  Thus, $w_n = \tilde{w}_n$ holds for every $n$.
\end{proof}

In the rest of this section, we use $w_n$ (also for representing $\tilde{w}_n$) to represent the sequence.
The next theorem is a main result of this section that explains the sequence $(w_n)_{n \geq 1}$ is 
an infinite non-decreasing sequence of Nyldon words representing the infinite Thue-Morse words $\infTM$.

\begin{theorem}\label{thm:NylF_TM}
  The sequence of words $(w_n)_{n \geq 1}$ satisfies the following properties;
  \begin{enumerate}
    \item $\infTM = \lim_{n \to \infty} (w_{1} \cdots w_n)$~(Lemma~\ref{lem:factorization-prop}).
    \item $w_n \preceq w_{n+1}$ for every $n \geq 1$~(Lemma~\ref{lem:ordering-prop}).
    \item $w_n$ is a Nyldon word for every $n \geq 1$~(Lemma~\ref{lem:Nyldon-prop}).
  \end{enumerate}
\end{theorem}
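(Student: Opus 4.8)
The theorem is the conjunction of the three cited lemmas, so I sketch how I would prove each.

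\emph{Part~1 (factorization, Lemma~\ref{lem:factorization-prop}).} The plan is to prove by induction on $k\ge2$ the word identity
\[
  w_1 w_2\cdots w_{k+2}\cdot\bigl(b\cdot\TM_{2k-2}\cdot\overline{\TM_{2k}}\bigr)=\TM_{2k+1}.
\]
The base case $k=2$ is a direct check from $w_1=a,w_2=b,w_3=ba,w_4=baabbaa$ and $\TM_5=\TM_4\overline{\TM_4}$. For the step, the definition of $w_n$ gives $w_{k+3}=b\cdot\TM_{2k-2}\cdot\overline{\TM_{2k}}\cdot\overline{\TM_{2k}'}$, that is, $w_{k+3}$ equals the bracketed block of the hypothesis followed by $\overline{\TM_{2k}'}$; hence by the hypothesis
\[
  w_1\cdots w_{k+3}\cdot\bigl(b\cdot\TM_{2k}\cdot\overline{\TM_{2k+2}}\bigr)
  =\TM_{2k+1}\cdot\overline{\TM_{2k}'}\cdot b\cdot\TM_{2k}\cdot\overline{\TM_{2k+2}}.
\]
Since $\TM_{2k}$ ends with $a$, $\overline{\TM_{2k}}$ ends with $b$, so $\overline{\TM_{2k}'}\cdot b=\overline{\TM_{2k}}$; combined with $\overline{\TM_{2k}}\TM_{2k}=\overline{\TM_{2k+1}}$ and two applications of $\TM_{m+1}=\TM_m\overline{\TM_m}$, the right-hand side collapses to $\TM_{2k+3}$, closing the induction. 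From the identity, each $w_1\cdots w_n$ is a prefix of a suitable $\TM_{2k+1}$, hence of $\infTM$, while $|w_1\cdots w_{k+2}|=2^{2k+1}-(1+2^{2k-2}+2^{2k})\to\infty$; so $\infTM=\lim_{n\to\infty}(w_1\cdots w_n)$.

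\emph{Part~2 (ordering, Lemma~\ref{lem:ordering-prop}).} For $n\le4$ this is immediate: $w_1=a\prec b=w_2$, $w_2$ and $w_3$ are proper prefixes of $w_3$ and $w_4$, and $w_4=baabbaa\prec w_5$ because they first differ at position $3$. For $n\ge5$ the plan is to expand $\TM_{2n-6}$ and $\overline{\TM_{2n-6}}$ with $\TM_{m+1}=\TM_m\overline{\TM_m}$ to see that both $w_n$ and $w_{n+1}$ begin with the common prefix $b\cdot\TM_{2n-8}\cdot\overline{\TM_{2n-8}}$, after which $w_n$ continues with $\TM_{2n-8}$ (first letter $a$) while $w_{n+1}$ continues with $\overline{\TM_{2n-8}}$ (first letter $b$); hence $w_n\prec w_{n+1}$.

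\emph{Part~3 (Nyldonness, Lemma~\ref{lem:Nyldon-prop}) --- the hard part.} Here the plan mirrors the proof of Lemma~\ref{finbonaccisuffix_nyldon}: induct on $n$ and apply Lemma~\ref{lNps}, writing $w_n=p_n\cdot\lnps{w_n}$ and checking that $p_n$ is a Nyldon word with $p_n\succ\lnps{w_n}$ (the cases $n\le4$ being finite checks; $w_1,\dots,w_4$ already appear as factors of $\NF{\TM_3},\NF{\TM_4}$). The obstacle is to identify $\lnps{w_n}$ and to rule out any longer Nyldon proper suffix: the candidate suffixes and the complementary prefix $p_n$ are themselves Thue--Morse-type words of the kind ($X_j,Y_j,Z_j$ and their relatives) occurring inside $\NF{\TM_m}$, so the induction cannot be run for $w_n$ alone but must be carried out simultaneously over the finite family of auxiliary strings $t_1(j),\dots,t_{12}(j)$ --- for each member establishing (i) that it is Nyldon, (ii) what its longest Nyldon proper suffix is, and (iii) the order relations among members at parameters $j$ and $j+1$ (the content of Lemma~\ref{k_prec_k1}, which is also what closes the deferred ``$X_j\prec Y_j$'' step in the proof of Theorem~\ref{nyldon_fac_of_tmk}). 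Each inductive step then reduces, through Lemmas~\ref{lNps} and~\ref{Nyldon-suffix}, to the same prefix comparison as in the Fibonacci case; I expect the bookkeeping over these twelve interlocking families, rather than any single step, to be the real difficulty. Combining Parts~1--3 yields the theorem.
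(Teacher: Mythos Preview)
Your proposal is correct in all three parts. Part~3 is exactly the paper's approach: the simultaneous induction over the twelve families $t_1(k),\dots,t_{12}(k)$, with Lemmas~\ref{lNps} and~\ref{Nyldon-suffix} driving each step and Lemma~\ref{k_prec_k1} supplying the cross-level order relations.

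Parts~1 and~2 are where you diverge. The paper works throughout via the morphism $\tmmorph^2$ and the alternative recursion $\tilde w_{n+1}=(baa)^{-1}\tmmorph^2(\tilde w_n)\cdot baa$ (Definition~\ref{def:infTM-morphic-fact}): for Part~1 it verifies the fixed-point equation $\tmmorph^2\bigl(\prod_n w_n\bigr)=\prod_n w_n$ using a short list of identities for $\tmmorph^2(w_1),\dots,\tmmorph^2(w_4)$; for Part~2 it shows inductively that a mismatch $w_k=xay$, $w_{k+1}=xbz$ propagates under $(baa)^{-1}\tmmorph^2(\cdot)\,baa$. You instead use the explicit formula $w_n=b\cdot\TM_{2n-8}\cdot\overline{\TM_{2n-6}}\cdot\overline{\TM_{2n-6}'}$ directly: for Part~1 an induction on the prefix identity $w_1\cdots w_{k+2}\cdot(b\cdot\TM_{2k-2}\cdot\overline{\TM_{2k}})=\TM_{2k+1}$, and for Part~2 a one-line expansion showing the common prefix $b\cdot\TM_{2n-8}\cdot\overline{\TM_{2n-8}}$ followed by $\TM_{2n-8}$ versus $\overline{\TM_{2n-8}}$. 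Your route is more elementary and self-contained (it never invokes $\tmmorph$ or Lemma~\ref{lem:Nyl-sequences}); the paper's route is more conceptual and makes transparent why the factorization is compatible with the Thue--Morse morphism.
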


\subsection{Factorization property}
We show the first property of Theorem~\ref{thm:NylF_TM}.
The property explains that the sequence $(w_n)_{n \geq 1}$ represents a factorization of $\infTM$.
By the definition of the morphism $\tmmorph$ of the Thue-Morse word,
we prove the following equation.
\begin{equation}\label{eq:infTM-morphism}
  \tmmorph^2 \left( \prod_{n \geq 1} w_n \right) = \prod_{n \geq 1} w_n.
\end{equation}

To prove the equation, we observe relations between some factors and the morphism $\tmmorph^2$.

\begin{observation}\label{obs:property-w}
    The following conditions hold. \begin{enumerate}
      \item $\tmmorph^2(w_1) = \tmmorph^2(a) = abba = w_1w_2w_3$
      \item $\tmmorph^2(w_2) = \tmmorph^2(b) = baab = (baabbaa)(baa)^{-1} = w_4(baa)^{-1}$
      \item $(baa)^{-1}\tmmorph^2(w_3)\tmmorph^2(w_4)(baa) = w_5$
      \item For every $n \geq 5$, $w_n$ has $b$ as a prefix, and $\tmmorph^2(w_n)$ has $baa$ as a prefix.
    \end{enumerate}
\end{observation}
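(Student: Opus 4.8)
The plan is to settle the four items by direct computation, anchored on the two identities $\tmmorph^2(a)=abba$ and $\tmmorph^2(b)=baab$. First I would record these from the definition of $\tmmorph$: since $\tmmorph(a)=ab$ and $\tmmorph(b)=ba$, a second application gives $\tmmorph^2(a)=\tmmorph(a)\tmmorph(b)=ab\cdot ba=abba$ and $\tmmorph^2(b)=\tmmorph(b)\tmmorph(a)=ba\cdot ab=baab$. Items~(1) and~(2) then follow by inspection against $w_1=a$, $w_2=b$, $w_3=ba$, $w_4=baabbaa$: indeed $w_1w_2w_3=a\cdot b\cdot ba=abba=\tmmorph^2(w_1)$, and $w_4(baa)^{-1}=(baabbaa)(baa)^{-1}=baab=\tmmorph^2(w_2)$.

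For item~(3) I would expand both sides. Since $\tmmorph^2$ is a morphism, $\tmmorph^2(w_3)=\tmmorph^2(ba)=\tmmorph^2(b)\tmmorph^2(a)=baab\cdot abba=baababba$, which begins with $baa$, so the left removal $(baa)^{-1}$ is legitimate; and $\tmmorph^2(w_4)=\tmmorph^2(baabbaa)$ is the length-$28$ word obtained by substituting $abba$ for each $a$ and $baab$ for each $b$ in $w_4$. Stripping the prefix $baa$ from $\tmmorph^2(w_3)\tmmorph^2(w_4)$ and appending $baa$ produces a word of length $36$. On the other side, $w_5=b\cdot\TM_2\cdot\overline{\TM_4}\cdot\overline{\TM_4'}$ also has length $36$, with $\TM_2=abba$, $\TM_4=abbabaabbaababba$, $\overline{\TM_4}=baababbaabbabaab$, and $\overline{\TM_4'}=baababbaabbabaa$. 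It then remains to check that these two length-$36$ words coincide; I would do so by matching them in blocks of four letters (the images of single symbols under $\tmmorph^2$) rather than letter by letter, since that is the one place where an arithmetic slip is easy.

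Item~(4) needs essentially no computation. For every $n\geq 5$ the defining expression $w_n=b\cdot\TM_{2n-8}\cdot\overline{\TM_{2n-6}}\cdot\overline{\TM_{2n-6}'}$ exhibits $b$ as the first symbol of $w_n$, so $w_n$ has $b$ as a prefix. Writing $w_n=b\,u$ and using again that $\tmmorph^2$ is a morphism, $\tmmorph^2(w_n)=\tmmorph^2(b)\tmmorph^2(u)=baab\cdot\tmmorph^2(u)$, so $\tmmorph^2(w_n)$ has $baab$, and in particular $baa$, as a prefix. Overall there is no genuine obstacle here: the observation is bookkeeping with short words, and the only step requiring care is the finite length-$36$ verification behind item~(3).
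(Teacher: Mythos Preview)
Your proposal is correct and matches the paper's intent: the paper states this as an \emph{Observation} without any proof, implicitly treating each item as a direct verification from the definitions of $\tmmorph$ and the $w_n$. Your explicit computations for items~(1)--(4) are exactly the expected justification, and nothing more is needed.
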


Now we can prove the equation by using the observation as follows.

\begin{lemma} \label{lem:factorization-prop}
  $\infTM = \lim_{n \to \infty} (w_{1} \cdots w_n)$ holds.
\end{lemma}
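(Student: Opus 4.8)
The plan is to first establish the morphism identity~\eqref{eq:infTM-morphism}, i.e.\ $\tmmorph^2\!\big(\prod_{n\geq1}w_n\big)=\prod_{n\geq1}w_n$, and then to identify $\prod_{n\geq1}w_n$ with $\infTM$ via the uniqueness of the fixed point of $\tmmorph^2$ starting with $a$. Indeed, since $\tmmorph^2(a)=abba$ begins with $a$ and has length at least $2$, the morphism $\tmmorph^2$ has a unique infinite fixed point beginning with $a$, namely $\lim_{k\to\infty}\tmmorph^{2k}(a)$, which is $\lim_{m\to\infty}\tmmorph^{m}(a)=\infTM$. The infinite word $\prod_{n\geq1}w_n$ (the factors being nonempty) begins with $w_1=a$, so once~\eqref{eq:infTM-morphism} is known we get $\lim_{n\to\infty}(w_1\cdots w_n)=\prod_{n\geq1}w_n=\infTM$, as required.

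To prove~\eqref{eq:infTM-morphism} I would first rewrite the recurrence of Definition~\ref{def:infTM-morphic-fact} (using Lemma~\ref{lem:Nyl-sequences}) as $\tmmorph^2(w_n)=baa\cdot w_{n+1}\cdot(baa)^{-1}$ for every $n\geq5$; the prefix/suffix deletions here are legitimate since $\tmmorph^2(w_n)$ begins with $baa$ (Observation~\ref{obs:property-w}(4)) and $w_{n+1}$ ends with $baa$ by its definition. Likewise Observation~\ref{obs:property-w}(3) gives $\tmmorph^2(w_3)\tmmorph^2(w_4)=baa\cdot w_5\cdot(baa)^{-1}$. Since $\tmmorph$ is non-erasing, $\tmmorph^2$ commutes with infinite concatenation, so
\[
  \tmmorph^2\!\Big(\prod_{n\geq1}w_n\Big)
  =\tmmorph^2(w_1)\,\tmmorph^2(w_2)\,\tmmorph^2(w_3)\,\tmmorph^2(w_4)\prod_{n\geq5}\tmmorph^2(w_n).
\]
The tail telescopes: for $N\geq5$ one obtains $\prod_{n=5}^{N}\tmmorph^2(w_n)=baa\cdot w_6w_7\cdots w_{N+1}\cdot(baa)^{-1}$, and since $|w_n|\to\infty$ the trailing $(baa)^{-1}$ is pushed off to infinity, whence $\prod_{n\geq5}\tmmorph^2(w_n)=baa\cdot\prod_{n\geq6}w_n$. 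Plugging this in, together with Observation~\ref{obs:property-w}(1),(2) and the identity for $\tmmorph^2(w_3)\tmmorph^2(w_4)$,
\[
  \tmmorph^2\!\Big(\prod_{n\geq1}w_n\Big)
  =(w_1w_2w_3)\,\big(w_4(baa)^{-1}\big)\,\big(baa\,w_5\,(baa)^{-1}\big)\,\Big(baa\prod_{n\geq6}w_n\Big)
  =\prod_{n\geq1}w_n,
\]
where every $(baa)^{-1}baa$ cancels. This is precisely~\eqref{eq:infTM-morphism}.

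The mathematical substance lives entirely in Observation~\ref{obs:property-w}; the main obstacle is purely the bookkeeping of the $baa$/$(baa)^{-1}$ pairs inside an infinite product. Specifically, one must check that the dangling $(baa)^{-1}$ at the right end of each finite partial product $baa\,w_6\cdots w_{N+1}(baa)^{-1}$ genuinely vanishes in the limit — which uses $|w_n|\to\infty$, immediate from the definition of $w_n$ — and that all the left-prefix deletions $(baa)^{-1}(\cdot)$ along the way are well defined, which is exactly what Observation~\ref{obs:property-w}(3),(4) supplies.
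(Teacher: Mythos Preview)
Your proof is correct and follows essentially the same route as the paper: both arguments establish Equation~\eqref{eq:infTM-morphism} by inserting and cancelling $baa\cdot(baa)^{-1}$ pairs using Observation~\ref{obs:property-w} and the recurrence of Definition~\ref{def:infTM-morphic-fact}, and then conclude via the fixed-point property of $\tmmorph^2$. You are simply more explicit than the paper about the fixed-point uniqueness and about why the trailing $(baa)^{-1}$ disappears in the infinite product.
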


\begin{proof}
  We show that Equation~\eqref{eq:infTM-morphism} holds.
  \begin{equation*}
    \begin{split}
      \tmmorph^{2} \left( \prod_{n \geq 1} w_n \right)    
        & = \tmmorph^{2}(w_1) \tmmorph^{2}(w_2) \tmmorph^{2}(w_3) \tmmorph^{2}(w_4) (baa) \prod_{n \geq 5} (baa)^{-1}\tmmorph^2(w_n)(baa)\\
        & = w_1 w_2 w_3 \cdot w_4 \cdot w_5 \cdot \prod_{n \geq 5} (baa)^{-1}\tmmorph^2(w_n)(baa)\\
        & = w_1 w_2 w_3 w_4 w_5 \prod_{n \geq 6} w_n = \prod_{n \geq 1} w_n.
\end{split}
  \end{equation*}
  Hence, $\infTM = \lim_{n \to \infty} (w_{1} \cdots w_n)$ holds.
\end{proof}

\subsection{Ordering property}
We show the second property of Theorem~\ref{thm:NylF_TM}.
The property explains that the words in the sequence $(w_n)_{n \geq 1}$ are sorted in lexicographically non-decreasing order.

\begin{lemma} \label{lem:ordering-prop}
  For every $n \geq 1$, $w_n \preceq w_{n+1}$ holds.
\end{lemma}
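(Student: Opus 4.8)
The plan is to establish $w_n \preceq w_{n+1}$ by splitting into the small indices $n \in \{1,2,3,4\}$, handled by direct inspection, and all $n \ge 5$, handled by a single uniform computation.

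First I would dispose of $n \le 4$. For $n \in \{2,3\}$ the word $w_n$ is a proper prefix of $w_{n+1}$ (namely $b$ of $ba$, and $ba$ of $baabbaa$), so $w_n \prec w_{n+1}$; for $n = 1$ simply $w_1 = a \prec b = w_2$. For $n = 4$ I would write out $w_5 = b \cdot \TM_2 \cdot \overline{\TM_4} \cdot \overline{\TM_4'}$, note that its prefix is $b\cdot\TM_2 = babba\cdots$, and compare with $w_4 = baabbaa$: the two agree on $ba$ and then $w_4$ has $a$ while $w_5$ has $b$, giving $w_4 \prec w_5$.

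For $n \ge 5$ I would set $m = 2n-8$ (so $m$ is even and $m \ge 2$), so that $w_n = b \cdot \TM_m \cdot \overline{\TM_{m+2}} \cdot \overline{\TM_{m+2}'}$ and $w_{n+1} = b \cdot \TM_{m+2} \cdot \overline{\TM_{m+4}} \cdot \overline{\TM_{m+4}'}$. The crux is the doubling identity $\TM_{m+2} = \TM_m \cdot \overline{\TM_m} \cdot \overline{\TM_m} \cdot \TM_m$, obtained by applying $\TM_{k+1} = \TM_k \overline{\TM_k}$ twice; flipping it yields $\overline{\TM_{m+2}} = \overline{\TM_m} \cdot \TM_m \cdot \TM_m \cdot \overline{\TM_m}$. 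Substituting these, $w_n$ starts as $b \cdot \TM_m \cdot \overline{\TM_m} \cdot \TM_m \cdots$ while $w_{n+1}$ starts as $b \cdot \TM_m \cdot \overline{\TM_m} \cdot \overline{\TM_m} \cdots$. Hence $w_n$ and $w_{n+1}$ share the common prefix $b \cdot \TM_m \cdot \overline{\TM_m}$, of length $1 + 2^{m+1}$, which is a proper prefix of each (a one-line length count: $|w_n| = 2^m + 2^{m+3}$ and $|w_{n+1}| = 2^{m+2} + 2^{m+5}$ are far larger). The symbol immediately following this common prefix is the first letter of $\TM_m$ in $w_n$, namely $a$ (every $\TM_k$ begins with $a$), versus the first letter of $\overline{\TM_m}$ in $w_{n+1}$, namely $b$. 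Since $a \prec b$, we get $w_n \prec w_{n+1}$, which would finish the proof.

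I do not expect a genuine obstacle here: the whole argument reduces to stripping the common prefix $b \cdot \TM_m \cdot \overline{\TM_m}$ and reading off the next letter, and the only things requiring care are the flip bookkeeping (and the shifts $2n-8$, $2n-6$) and confirming that the stripped prefix is proper in both words so that a ``next letter'' actually exists. One could instead try to run an induction through the morphic presentation $\tilde{w}_{n+1} = (baa)^{-1} \tmmorph^{2}(\tilde{w}_n) \cdot baa$ --- here $\tmmorph$, and hence $\tmmorph^{2}$, is strictly order-preserving, but appending $baa$ on the right does \emph{not} preserve the lexicographic order, so the direct computation above is the safer route.
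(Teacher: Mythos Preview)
Your argument is correct and takes a genuinely different route from the paper. For $n\ge 5$ you work directly with the explicit formula $w_n = b\cdot\TM_m\cdot\overline{\TM_{m+2}}\cdot\overline{\TM_{m+2}'}$ and, via the identity $\TM_{m+2}=\TM_m\,\overline{\TM_m}\,\overline{\TM_m}\,\TM_m$, pin down the first mismatch at position $|b\cdot\TM_m\cdot\overline{\TM_m}|+1$ once and for all, with no induction. The paper instead runs an induction through the morphic presentation $\tilde w_{n+1}=(baa)^{-1}\tmmorph^{2}(\tilde w_n)\cdot baa$: it carries the stronger invariant that $w_n=x\cdot a\cdot y$ and $w_{n+1}=x\cdot b\cdot z$ for a common prefix $x$, checks the base $x=babbabaab$ at $n=5$, and propagates this shape by applying $\tmmorph^{2}$ (so the new common prefix is $(baa)^{-1}\tmmorph^{2}(x)$). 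Your closing worry is on point but does not actually bite: the paper sidesteps the fact that appending $baa$ need not preserve $\prec$ precisely by tracking the explicit $xa$-versus-$xb$ mismatch rather than the bare order relation. In effect both proofs locate the very same mismatch position (your $b\cdot\TM_m\cdot\overline{\TM_m}$ is exactly the paper's $x$ at each level); yours gets there in one shot and is the more elementary of the two, while the paper's inductive version dovetails with the morphic viewpoint used elsewhere in Section~\ref{sec:inf-TM}.
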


\begin{proof}
  We show that $w_{n} = x \cdot a \cdot y$ and $w_{n+1} = x \cdot b \cdot z$ for some strings $x$, $y$, and $z$ for every $n \geq 5$.
  Firstly, we can check the statement holds for $n=5$ with $x = babbabaab$.
  We prove this statement by induction on $n$.
  Suppose that $w_{k} = x \cdot a \cdot y$ and $w_{k+1} = x \cdot b \cdot z$ for some strings $x$, $y$, and $z$ for some $k \geq 5$.
  By the definition of $w_k$ and the induction hypothesis,
  $w_{k+1} = (baa)^{-1} \tmmorph^2(w_{k}) \cdot baa = (baa)^{-1} \tmmorph^2(x \cdot a \cdot y) \cdot baa$ and
  $w_{k+2} = (baa)^{-1} \tmmorph^2(w_{k+1}) \cdot baa = (baa)^{-1} \tmmorph^2(x \cdot b \cdot z) \cdot baa$ hold.
  Hence, $w_{k+1} = X a Y$ and $w_{k+2} = X b Z$ hold
  where $X = (baa)^{-1} \tmmorph^2(x)$, $Y = bba \tmmorph^2(y) \cdot baa$, and $Z = aab \tmmorph^2(z) \cdot baa$.
  Thus, the statement holds for every $n \geq 5$.
  This fact also implies that $w_n \preceq w_{n+1}$ for every $n \geq 5$.
  On the other hand, we can also check that $w_1 \preceq w_2 \preceq \ldots \preceq w_5$.
Therefore, $w_n \preceq w_{n+1}$ holds for every $n \geq 1$.
\end{proof}

\subsection{Nyldon property}
We show the third property of Theorem~\ref{thm:NylF_TM}.
Our proof is based on an important combinatorial property of Nyldon words which was explained in Lemma~\ref{lNps}.
We consider the specific Nyldon substrings in $w_n$ that are produced by recursively applying the property (i.e., $\lnps{\cdot}$) to Nyldon substrings.
Let $\Pi_k = \{ t_i(k) \mid i \in [1, 12] \}$
be the set of twelve types of substrings that are Nyldon words for every $k \geq 1$,
where $[l, r]$ represents the set of integers $i$ such that $l \leq i \leq r$.
Every element in $\Pi_k$ is defined as follows.
For convenience, we also represent the substrings by using $A = \TM_{2k}$ and $B = \overline{\TM_{2k}}$.
\begin{alignat*}{2}
  \TMI{k} & = b \cdot {\TM_{2k+1}'} &
    &= b \cdot A \cdot  B'\\
  \TMII{k} &=  b \cdot {\TM_{2k+1}} &
    &= b \cdot A \cdot B\\
  \TMIII{k} &=  b \cdot {\TM_{2k+1}} \cdot {\overline{\TM_{2k}'}} &
    &= b  \cdot A \cdot B \cdot B'\\
  \TMIV{k} &=  b \cdot {\TM_{2k}} &
    &= b \cdot A\\
  \TMV{k} &=  b \cdot \TM_{2k} \cdot \TM_{2k-1} \cdot {\overline{\TM_{2k-2}'}} &
    &= b \cdot A  \cdot {\TM_{2k-1}} \cdot {\overline{\TM_{2k-2}'}}\\
  \TMVI{k} &=  b \cdot {\TM_{2k}} \cdot {\TM_{2k+1}'} &
    &= b \cdot A \cdot  A \cdot  B'\\
  \TMVII{k} &=  b \cdot {\TM_{2k}} \cdot \TM_{2k+1} &
    &= b \cdot A \cdot  A \cdot  B\\
  \TMVIII{k} &=  b \cdot \TM_{2k} \cdot \TM_{2k+1} \cdot {\overline{\TM_{2k}'}} &
    &= b \cdot A \cdot  A \cdot  B \cdot B'\\
  \TMIX{k} &=  b \cdot {\TM_{2k}} \cdot {\overline{\TM_{2k+2}'}} &
    &= b \cdot A \cdot  B \cdot A \cdot A \cdot B'\\
  \TMX{k} &=  b \cdot {\TM_{2k}} \cdot {\overline{\TM_{2k+2}}} &
    &= b \cdot A \cdot  B \cdot A \cdot A \cdot B\\
  \TMXI{k} &=  b \cdot \TM_{2k} \cdot \TM_{2k+1} \cdot {\overline{\TM_{2k+2}'}} &
    &= b \cdot A \cdot  A \cdot  B \cdot B \cdot A \cdot A \cdot B'\\
  \TMXII{k} &=  b \cdot \TM_{2k} \cdot \overline{\TM_{2k+2}} \cdot {\overline{\TM_{2k+2}'}} &
    &= b \cdot A \cdot B \cdot A \cdot A \cdot B \cdot B \cdot A \cdot A \cdot B'
\end{alignat*}
Notice that $\TMXII{k-4} = w_{k}$ for every $k \geq 5$.
Hence, our goal of this part is to show that every string in $\Pi_k$ is a Nyldon word.
This implies the third property of Theorem~\ref{thm:NylF_TM} given by Lemma~\ref{lem:Nyldon-prop}.
Firstly, we introduce properties of strings in $\Pi_k$.

\begin{observation}\label{obs:lexicographical_order_tmk}
 For every $k \geq 1$, $\TMIV{k} \prec \TMV{k} \prec \TMVI{k} \prec \TMVII{k} \prec \TMVIII{k} \prec \TMXI{k} \prec \TMI{k} \prec \TMII{k} \prec \TMIX{k} \prec \TMX{k} \prec \TMXII{k} \prec \TMIII{k}$ holds.
\end{observation}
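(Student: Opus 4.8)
The plan is to reduce the twelve-term chain to comparisons between short explicit words obtained by stripping a common prefix. Writing $A=\TM_{2k}$ and $B=\overline{\TM_{2k}}$ (so $\overline A=B$ and $\overline B=A$), the identities $\TM_{m}=\TM_{m-1}\overline{\TM_{m-1}}$ and $\overline{xy}=\overline x\,\overline y$ give $\TM_{2k+1}=AB$, $\overline{\TM_{2k+1}}=BA$, $\TM_{2k+2}=ABBA$ and $\overline{\TM_{2k+2}}=BAAB$. Unfolding the twelve definitions with these, one sees that every word of $\Pi_k$ has $u:=b\TM_{2k}=bA$ as a prefix; writing $t_i(k)=u\cdot r_i$, the twelve tails are
\[
 r_4=\varepsilon,\quad r_5=\TM_{2k-1}\,\overline{\TM_{2k-2}'},\quad r_6=AB',\quad r_7=AB,\quad r_8=ABB',\quad r_{11}=ABBAAB',
\]
\[
 r_1=B',\quad r_2=B,\quad r_9=BAAB',\quad r_{10}=BAAB,\quad r_{12}=BAABBAAB',\quad r_3=BB'.
\]
So it suffices to prove $r_4\prec r_5\prec r_6\prec r_7\prec r_8\prec r_{11}\prec r_1\prec r_2\prec r_9\prec r_{10}\prec r_{12}\prec r_3$, and I would check the eleven consecutive inequalities in turn.

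Most of them are instances of the trivial fact that a proper prefix of a word is lexicographically smaller than the word. Since $B'$ is a proper prefix of $B$, one reads off directly that $r_4=\varepsilon$ is a proper prefix of $r_5$, that $r_6$ is a proper prefix of $r_7$, $r_7$ of $r_8$, and $r_8=ABB'$ of $r_{11}=ABBAAB'$ (because $B'$ is a prefix of $B$, hence of $BAAB'$); and symmetrically that $r_1$ is a proper prefix of $r_2$, $r_2$ of $r_9$, $r_9$ of $r_{10}$, and $r_{10}$ of $r_{12}$. The only ``prefix'' step that needs an extra line is $r_5\prec r_6$: here I would expand $A=\TM_{2k-1}\overline{\TM_{2k-1}}=\TM_{2k-1}\,\overline{\TM_{2k-2}}\,\TM_{2k-2}$ (using $\overline{\TM_{2k-1}}=\overline{\TM_{2k-2}}\TM_{2k-2}$), so that $r_6=\TM_{2k-1}\,\overline{\TM_{2k-2}}\,\TM_{2k-2}\,\overline{\TM_{2k}'}$, and observe that $\overline{\TM_{2k-2}'}$ is a prefix of $\overline{\TM_{2k-2}}$; hence $r_5=\TM_{2k-1}\,\overline{\TM_{2k-2}'}$ is a proper prefix of $r_6$, giving $r_5\prec r_6$.

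The remaining two inequalities $r_{11}\prec r_1$ and $r_{12}\prec r_3$ are the only genuinely different ones, and both are settled at a position where a block $A$ meets a block $B$: since every $\TM_m$ begins with $a$, the block $A$ begins with $a$ and the block $B$ begins with $b$. For $r_{11}\prec r_1$, the word $r_{11}$ begins with the block $A$, whose first letter is $a$, while $r_1=B'$ begins with $b$, so the two already differ at the first letter. For $r_{12}\prec r_3$, both tails begin with $B$, and right after that common $B$ the word $r_{12}$ continues with $A$ (first letter $a$) while $r_3=BB'$ continues with the nonempty word $B'$ (first letter $b$; recall $|B|=2^{2k}\ge 2$). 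Concatenating the eleven inequalities yields the Observation.

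I do not expect a real obstacle: the content is entirely elementary. The only points requiring care are (i) translating the twelve definitions into the tails $r_i$ correctly via the Thue-Morse identities, (ii) the single nested expansion used for $r_5\prec r_6$, and (iii) checking the smallest index $k=1$, where $\overline{\TM_{2k-2}'}=\overline{\TM_0'}=\varepsilon$ --- which is harmless, since the argument for $r_5\prec r_6$ only uses that $\overline{\TM_{2k-2}'}$ is a prefix of $\overline{\TM_{2k-2}}$.
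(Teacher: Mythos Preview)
Your proof is correct. The paper records this as an Observation without proof; your approach of stripping the common prefix $bA$ and comparing the explicit tails $r_i$ (most comparisons being proper-prefix relations, with only $r_{11}\prec r_1$ and $r_{12}\prec r_3$ decided by the first letter $a$ of $A$ versus $b$ of $B$) is precisely the routine verification the paper leaves to the reader.
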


\begin{lemma} \label{k_prec_k1}
For every $k \geq 1$ and $\ell_1, \ell_2 \in [1, 12]$, $t_{\ell_1}(k) \prec t_{\ell_2}(k+1)$ holds.
\end{lemma}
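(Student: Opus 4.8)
The plan is to reduce the statement to a single comparison by exploiting the total order of the twelve level-$k$ words already recorded in Observation~\ref{obs:lexicographical_order_tmk}. That observation shows that among the twelve words of level $k$ the smallest is $\TMIV{k}$ and the largest is $\TMIII{k}$, so for arbitrary $\ell_1,\ell_2\in[1,12]$ we have $t_{\ell_1}(k)\preceq\TMIII{k}$ and $\TMIV{k+1}\preceq t_{\ell_2}(k+1)$. Hence it suffices to prove the single strict inequality $\TMIII{k}\prec\TMIV{k+1}$ for every $k\geq 1$; the chain $t_{\ell_1}(k)\preceq\TMIII{k}\prec\TMIV{k+1}\preceq t_{\ell_2}(k+1)$ then yields the lemma.

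To prove $\TMIII{k}\prec\TMIV{k+1}$ I would rewrite both words in terms of Thue--Morse words and show that the first is a proper prefix of the second. From the definitions, $\TMIII{k}=b\cdot\TM_{2k+1}\cdot\overline{\TM_{2k}'}$, while unfolding the Thue--Morse recurrence once gives $\TMIV{k+1}=b\cdot\TM_{2k+2}=b\cdot\TM_{2k+1}\cdot\overline{\TM_{2k+1}}=b\cdot\TM_{2k+1}\cdot\overline{\TM_{2k}}\cdot\TM_{2k}$. Since $\overline{\TM_{2k}'}$ is $\overline{\TM_{2k}}$ with its last symbol removed, it is a prefix of $\overline{\TM_{2k}}$, so $\TMIII{k}$ is a prefix of $b\cdot\TM_{2k+1}\cdot\overline{\TM_{2k}}$, which is itself a prefix of $\TMIV{k+1}$. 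Comparing lengths, $|\TMIII{k}|=1+2^{2k+1}+(2^{2k}-1)=3\cdot 2^{2k}$ is strictly smaller than $|\TMIV{k+1}|=1+2^{2k+2}$, so the prefix is proper and therefore $\TMIII{k}\prec\TMIV{k+1}$.

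I do not expect a genuine obstacle here. After invoking Observation~\ref{obs:lexicographical_order_tmk} the argument is purely a prefix comparison, and the only points requiring care are the routine Thue--Morse identities $\TM_{2k}\cdot\overline{\TM_{2k}}=\TM_{2k+1}$ and $\overline{\TM_{2k+1}}=\overline{\TM_{2k}}\cdot\TM_{2k}$ together with the length count confirming that the common prefix is strictly shorter than $\TMIV{k+1}$. All of the substantive combinatorial content --- the ordering of the twelve words at a fixed level --- is packaged in Observation~\ref{obs:lexicographical_order_tmk}, which we take as an established earlier result.
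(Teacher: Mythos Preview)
Your proposal is correct and follows essentially the same route as the paper: both reduce the claim via Observation~\ref{obs:lexicographical_order_tmk} to the single inequality $\TMIII{k}\prec\TMIV{k+1}$, and both establish that inequality by observing that $\TMIII{k}=b\cdot\TM_{2k+1}\cdot\overline{\TM_{2k}'}$ is a proper prefix of $b\cdot\TM_{2k+1}\cdot\overline{\TM_{2k+1}}=\TMIV{k+1}$. The paper states this prefix step tersely whereas you spell out the Thue--Morse recurrence and the length count, but the argument is the same.
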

\begin{proof}
  We first prove $\TMIII{k} \prec \TMIV{k+1}$ for $k \geq 1$.
  It is because 
    \begin{align*}
    \TMIII{k} &= b \cdot {\TM_{2k+1}} \cdot {\overline{\TM_{2k}'}}\\
    &\prec b \cdot {\TM_{2k+1}} \cdot {\overline{\TM_{2k+1}}}\\
    &= b \cdot {\TM_{2k+2}}\\
    &= \TMIV{k+1}.
  \end{align*}
  By this fact and Observation~\ref{obs:lexicographical_order_tmk}, Lemma~\ref{k_prec_k1} holds.
\end{proof}

Now we can show that any string in $\Pi_k$ is a Nyldon word.

\begin{lemma}\label{lem:many-Nyldon}
  For every $k \geq 1$ and $\ell \in [1, 12]$, $\TMY{\ell}{k} \in \Pi_k$ is a Nyldon word.
\end{lemma}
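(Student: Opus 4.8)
The plan is to prove the lemma by induction on $k$, strengthening it — as in the proof of Lemma~\ref{finbonaccisuffix_nyldon} — so that the induction hypothesis also records, for every $\ell \in [1,12]$, the value of $\lnps{\TMY{\ell}{k}}$. The base case $k=1$ is handled by direct computation: each $\TMY{\ell}{1}$ is an explicit binary string of bounded length, so one checks by hand (or by iterating Lemma~\ref{lNps}) that it is a Nyldon word and determines its longest Nyldon proper suffix, and one also records that each of these twelve suffixes is lexicographically smaller than every element of $\Pi_1$ (finitely many comparisons, which play the role of a ``$\Pi_0 \prec \Pi_1$'' statement not covered by Lemma~\ref{k_prec_k1}).

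For the inductive step, fix $k \geq 2$ and assume the strengthened claim for all smaller indices. The key structural fact is $\TM_{2k} = \TM_{2k-2}\,\overline{\TM_{2k-2}}\,\overline{\TM_{2k-2}}\,\TM_{2k-2}$, so that $A = \TM_{2k}$ and $B = \overline{\TM_{2k}}$ are length-$4$ words over the two ``super-letters'' $\TM_{2k-2}$, $\overline{\TM_{2k-2}}$; combined with the parity facts that $\TM_{2j}$ ends with $a$ and $\overline{\TM_{2j}}$ ends with $b$ for every $j \geq 1$ (so that, e.g., $\overline{\TM_{2j}}'\cdot b = \overline{\TM_{2j}}$), this lets me exhibit, for each $\ell \in [1,12]$, a factorization $\TMY{\ell}{k} = p \cdot s$ in which $s$ is claimed to equal $\lnps{\TMY{\ell}{k}}$ and $p$, $s$ are words that are \emph{already available as Nyldon words}: either strictly shorter types at level $k$, or types at level $k-1$. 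Representative instances are $\TMIV{k} = b\cdot\TM_{2k} = \TMIII{k-1}\cdot\TMIV{k-1}$, $\TMIII{k} = \TMIII{k-1}\cdot\TMXII{k-1}$, $\TMVIII{k} = \TMV{k}\cdot\TMXII{k-1}$, $\TMXI{k} = \TMVIII{k}\cdot\TMVI{k}$, and $\TMXII{k} = \TMI{k}\cdot\TMXI{k}$; the remaining types are rewritten in the same style. One fixes a processing order for the twelve types (roughly by length) so that when $\TMY{\ell}{k}$ is treated, $p$, $s$, and $\lnps p$ have all already been determined.

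Given such a factorization, $\TMY{\ell}{k}$ is shown to be a Nyldon word via Lemma~\ref{lNps}. The inequality $p \succ s$ comes from Observation~\ref{obs:lexicographical_order_tmk} when $p$ and $s$ lie at the same level, and from Lemma~\ref{k_prec_k1} when $p$ is at level $k$ and $s$ at level $k-1$. The identity $s = \lnps{\TMY{\ell}{k}}$ is proved exactly as in Lemma~\ref{finbonaccisuffix_nyldon}: if there were a Nyldon proper suffix of $\TMY{\ell}{k}$ strictly longer than $s$, take the shortest such; it has the form $x\cdot s$ with $x$ nonempty, minimality forces $\lnps{x\cdot s} = s$, so Lemma~\ref{lNps} makes $x$ a Nyldon word with $x \succ s$; but $x$ is also a Nyldon proper suffix of $p$, hence $x \preceq \lnps p$, and since by the induction hypothesis $\lnps p$ is strictly below $s$ in the global order — using Lemma~\ref{k_prec_k1} when $\lnps p$ sits one level below $s$, and Observation~\ref{obs:lexicographical_order_tmk} (or the recorded base-case values) when they are at the same level — we get $x \prec s$, a contradiction. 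This both completes the application of Lemma~\ref{lNps} and supplies $\lnps{\TMY{\ell}{k}} = s$, carrying the strengthened hypothesis to the next value of $k$. Applying this for all $\ell \in [1,12]$ finishes the induction, and in particular proves the lemma.

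The main obstacle I anticipate is bookkeeping rather than any single hard idea: one must pin down, correctly and uniformly, the twelve factorizations $\TMY{\ell}{k} = p\cdot s$ and the induced map $\ell \mapsto \lnps{\TMY{\ell}{k}}$, verify the off-by-one (prime / trailing-letter) arithmetic in each rewrite, and check that every lexicographic comparison that arises indeed reduces to Observation~\ref{obs:lexicographical_order_tmk} or Lemma~\ref{k_prec_k1}; the genuinely delicate sub-step in each case is ruling out a longer Nyldon proper suffix, i.e.\ confirming that the exhibited $s$ really is $\lnps{\TMY{\ell}{k}}$, which is where the induction hypothesis on $\lnps p$ and the ordering lemmas must be used most carefully. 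A further point requiring attention is that Lemma~\ref{k_prec_k1} only relates consecutive levels from level $1$ onward, so the level-$2$ instances of the argument have to be closed using the explicit base-case values $\lnps{\TMY{\ell}{1}}$ and the finite comparisons mentioned above.
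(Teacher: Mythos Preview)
Your proposal is correct and follows essentially the same approach as the paper: strengthen the statement to track $\lnps{\TMY{\ell}{k}}$, exhibit for each type a factorization $\TMY{\ell}{k}=p\cdot s$ (your sample factorizations coincide with the paper's table), rule out a longer Nyldon suffix by the same contradiction via Lemma~\ref{lNps} and Lemma~\ref{Nyldon-suffix}, and close with Lemma~\ref{lNps}; the paper differs only in taking both $k=1$ and $k=2$ as base cases (so the inductive step starts at $k\geq 3$ and never needs level-$0$ data), whereas you take $k=1$ alone and patch the $k=2$ step with the recorded values of $\lnps{\TMY{\ell}{1}}$ and finitely many extra comparisons.
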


\begin{proof}
  We prove $\TMY{\ell}{k}$ is a Nyldon word for every $k \geq 1$
  and each $\lnps{\TMY{\ell}{k}}$ is a string given in Table~\ref{table_R_k} for every $k \geq 2$.
  We prove these by induction on $k$ and $\ell$.
  
  \vskip.5\baselineskip
  \noindent{\bf Base case.}
  We can check the statements hold for $k = 1, 2$ based on the definitions.
  
  \vskip.5\baselineskip
  \noindent{\bf Induction step.}
  Let $i, j$ be integers satisfying $i>2$ and $j \in [1, 12]$.
  Suppose that the statement holds for every $(k, \ell)$ such that $1 \leq k < i$, or $k = i$ and $\ell<j$.
  We show that the statement also holds for $(k, \ell) = (i, j)$.

  \begin{table}[h]
    \caption{List of $\TMY{\ell}{k} \cdot {\lnps{\TMY{\ell}{k}}}^{-1}$ and $\lnps{\TMY{\ell}{k}}~(k > 1)$}
    \label{table_R_k}
    \centering
    \begin{tabular}{l|l|l}
        \hline
        $\TMY{\ell}{k} \in \Pi_k$ & $\TMY{\ell}{k} \cdot {\lnps{\TMY{\ell}{k}}}^{-1}$ & $\lnps{\TMY{\ell}{k}}$ \\
        \hline
        $\TMI{k}$ & $\TMIII{k-1}$ & $\TMIX{k-1}$   \\
        $\TMII{k}$ & $\TMIII{k-1}$ & $\TMX{k-1}$   \\
        $\TMIII{k}$ & $\TMIII{k-1}$ & $\TMXII{k-1}$   \\
        $\TMIV{k}$ & $\TMIII{k-1}$ & $\TMIV{k-1}$   \\
        $\TMV{k}$ & $\TMIII{k-1}$ & $\TMVIII{k-1}$   \\
        $\TMVI{k}$ & $\TMV{k}$ & $\TMIX{k-1}$   \\
        $\TMVII{k}$ & $\TMV{k}$ & $\TMX{k-1}$   \\
        $\TMVIII{k}$ & $\TMV{k}$ & $\TMXII{k-1}$   \\
        $\TMIX{k}$ & $\TMII{k}$ & $\TMVI{k}$   \\
        $\TMX{k}$ & $\TMII{k}$ & $\TMVII{k}$   \\
        $\TMXI{k}$ & $\TMVIII{k}$ & $\TMVI{k}$   \\
        $\TMXII{k}$ & $\TMI{k}$ & $\TMXI{k}$   \\
        \hline
    \end{tabular}
  \end{table}

Here, we first explain a common overview for all cases.
  After the overview, we give proofs for each case.

  \vskip.5\baselineskip
  \noindent [Sketch of proof for each case]
  Let $t_j(i) = p_0 s_0$ where $p_0 \cdot s_0$ is a factorization given in Table~\ref{table_R_k}.
  Firstly, we check that $t_j(i)$ is the concatenation of $p$ and $s$ in fact by the definition of $t_j(i)$.
  By induction hypothesis, we can see that $s_0$ is a Nyldon suffix of $t_j(i)$. 
  Then, we want to prove $\lnps{t_j(i)} = s_0$.
  Assume on the contrary that there exists a longer Nyldon suffix $x \cdot s$ ($x$ is the shortest such string).
  This assumption implies that $s_0$ is the longest Nyldon proper suffix of $x \cdot s_0$.
  By Lemma~\ref{lNps}, $x$ must be a Nyldon word such that $x \succ s_0$ holds.  
  Moreover, $x$ is a suffix of $s_1 = \lnps{p_0}$.
  Since $s_1 = t_{j'}(i-\ell)$ for some $\ell$ by induction hypothesis,
  $x$ is also a Nyldon suffix of $s_1$.
  By Lemma~\ref{Nyldon-suffix}, $x \preceq s_1$.
  Hence, we have $s_0 \prec x \preceq s_1$.
  However, we can also show that $s_0 \succ s_1$.
  Let $s_0 = t_{i_0}(j_0)$ and $s_1 = t_{i_1}(j_1)$.
  If $j_1 < j_0$ holds, we can use Lemma~\ref{k_prec_k1},
  otherwise, we can show it directly by using Observation~\ref{obs:lexicographical_order_tmk}.
  This contradiction implies that $\lnps{t_j(i)} = s_0$.
  Finally, we want to prove $t_j(i)$ is a Nyldon word.
  We can see that $p_0 \succ s_0$ holds by Lemma~\ref{k_prec_k1} or Observation~\ref{obs:lexicographical_order_tmk}.
  Hence, we can obtain $t_j(i) = p_0 s_0$ by applying Lemma~\ref{lNps}.

  \vskip.5\baselineskip
  \noindent [Proof for $\TMI{i}$]
  We consider the case when $j = 1$.
  Firstly, we show that $\lnps{\TMI{i}} = \TMIX{i-1}$.
  By the definitions of Thue-Morse words, we can write 
  \begin{align*}
    \TMI{i} &=  b \cdot {\TM_{2i+1}'} \\
            &= b \cdot {\TM_{2i-1}} \cdot {\overline{\TM_{2i-2}'}} \cdot b \cdot {\TM_{2i-2}} \cdot {\overline{\TM_{2i}'}} \\
            &= \TMIII{i-1} \cdot \TMIX{i-1}
  \end{align*}
  and the suffix $\TMIX{i-1}$ is a Nyldon word by induction hypothesis.
  Assume on the contrary that there exists a longer Nyldon suffix $x \cdot \TMIX{i-1}$ of $\TMI{i}$ for some non-empty string $x$.
  Here, $x$ is also assumed to be the shortest such string.
  Namely, $\TMIX{i-1}$ is the longest Nyldon proper suffix of $x \cdot \TMIX{i-1}$.
  By Lemma~\ref{lNps}, $x$ must be a Nyldon word such that $x \succ \TMIX{i-1}$ holds.  
  Moreover, $x$ is a suffix of $\TMIII{i-1}$.
  Since $\lnps{\TMIII{i-1}} = \TMXII{i-2}$ by induction hypothesis,
  $x$ is also a Nyldon suffix of $\TMXII{i-2}$.
  By Lemma~\ref{Nyldon-suffix}, $x \prec \TMXII{i-2}$.
  Hence, we have $\TMIX{i-1} \prec x \preceq \TMXII{i-2}$.
  However, this fact contradicts $\TMIX{i-1} \succ \TMXII{i-2}$ by Lemma~\ref{k_prec_k1}.
  This implies that $\lnps{\TMI{i}} = \TMIX{i-1}$.
  Finally, we prove $\TMI{i}$ is a Nyldon word by Lemma~\ref{lNps}.
  We know that $\TMI{i} = \TMIII{i-1} \cdot \TMIX{i-1}$, $\TMIII{i-1} \succ \TMIX{i-1}$ by Observation~\ref{obs:lexicographical_order_tmk}, $\lnps{\TMI{i}} = {\TMIX{i-1}}$,
  and $\TMIII{i-1}$ is a Nyldon word by induction hypothesis.
  Therefore, $\TMI{i}$ is a Nyldon word by these facts and Lemma~\ref{lNps}.

  \vskip.5\baselineskip
  \noindent [Proof for $\TMII{i}$]
  We consider the case when $j = 2$.
  Firstly, we show that $\lnps{\TMII{i}} = \TMX{i-1}$.
  By the definitions of Thue-Morse words, we can write 
  \begin{align*}
    \TMII{i}  &= b \cdot {\TM_{2i+1}}\\
              &= b \cdot {\TM_{2i-1}} \cdot {\overline{\TM_{2i-2}'}} \cdot b \cdot {\TM_{2i-2}} \cdot {\overline{\TM_{2i}}} \\
              &= \TMIII{i-1} \cdot \TMX{i-1}
  \end{align*}
  and the suffix $\TMX{i-1}$ is a Nyldon word by induction hypothesis.
  Assume on the contrary that there exists a longer Nyldon suffix $x \cdot \TMX{i-1}$ of $\TMII{i}$ for some non-empty string $x$.
  Here, $x$ is also assumed to be the shortest such string.
  Namely, $\TMX{i-1}$ is the longest Nyldon proper suffix of $x \cdot \TMX{i-1}$.
  By Lemma~\ref{lNps}, $x$ must be a Nyldon word such that $x \succ \TMX{i-1}$ holds.  
  Moreover, $x$ is a suffix of $\TMIII{i-1}$.
  Since $\lnps{\TMIII{i-1}} = \TMXII{i-2}$ by induction hypothesis,
  $x$ is also a Nyldon suffix of $\TMXII{i-2}$.
  By Lemma~\ref{Nyldon-suffix}, $x \prec \TMXII{i-2}$.
  Hence, we have $\TMX{i-1} \prec x \preceq \TMXII{i-2}$.
However, this fact contradicts $\TMX{i-1} \succ \TMXII{i-2}$ by Lemma~\ref{k_prec_k1}.
This implies that $\lnps{\TMII{i}} = \TMX{i-1}$.
  Finally, we prove $\TMII{i}$ is a Nyldon word by Lemma~\ref{lNps}.
We know that $\TMII{i} = \TMIII{i-1} \cdot \TMX{i-1}$, 
  $\TMIII{i-1} \succ \TMX{i-1}$ by Observation~\ref{obs:lexicographical_order_tmk},
  $\lnps{\TMII{i}} = {\TMX{i-1}}$,
  and $\TMIII{i-1}$ is a Nyldon word by induction hypothesis.
  Therefore, $\TMII{i}$ is a Nyldon word by these facts and Lemma~\ref{lNps}.

  \vskip.5\baselineskip
  \noindent [Proof for $\TMIII{i}$]
  We consider the case when $j = 3$.
  Firstly, we show that $\lnps{\TMIII{i}} = \TMXII{i-1}$.
  By the definitions of Thue-Morse words, we can write 
  \begin{align*}
    \TMIII{i} &= b \cdot {\TM_{2i+1}} \cdot {\overline{\TM_{2i}'}}\\
              &= b \cdot {\TM_{2i-1}} \cdot {\overline{\TM_{2i-2}'}} \cdot b \cdot {\TM_{2i-2}}  \cdot {\overline{\TM_{2i}}} \cdot {\overline{\TM_{2i}'}} \\
              &= \TMIII{i-1} \cdot \TMXII{i-1}
  \end{align*}
  and the suffix $\TMXII{i-1}$ is a Nyldon word by induction hypothesis.
  Assume on the contrary that there exists a longer Nyldon suffix $x \cdot \TMXII{i-1}$ of $\TMIII{i}$ for some non-empty string $x$.
  Here, $x$ is also assumed to be the shortest such string.
  Namely, $\TMXII{i-1}$ is the longest Nyldon proper suffix of $x \cdot \TMXII{i-1}$.
  By Lemma~\ref{lNps}, $x$ must be a Nyldon word such that $x \succ \TMXII{i-1}$ holds.  
  Moreover, $x$ is a suffix of $\TMIII{i-1}$.
  Since $\lnps{\TMIII{i-1}} = \TMXII{i-2}$ by induction hypothesis,
  $x$ is also a Nyldon suffix of $\TMXII{i-2}$.
  By Lemma~\ref{Nyldon-suffix}, $x \prec \TMXII{i-2}$.
  Hence, we have $\TMXII{i-1} \prec x \preceq \TMXII{i-2}$.
However, this fact contradicts $\TMXII{i-1} \succ \TMXII{i-2}$ by Lemma~\ref{k_prec_k1}.
This implies that $\lnps{\TMIII{i}} = \TMXII{i-1}$.
  Finally, we prove $\TMIII{i}$ is a Nyldon word by Lemma~\ref{lNps}.
We know that $\TMIII{i} = \TMIII{i-1} \cdot \TMXII{i-1}$, 
  $\TMIII{i-1} \succ \TMXII{i-1}$ by Observation~\ref{obs:lexicographical_order_tmk},
  $\lnps{\TMIII{i}} = {\TMXII{i-1}}$,
  and $\TMIII{i-1}$ is a Nyldon word by induction hypothesis.
  Therefore, $\TMIII{i}$ is a Nyldon word by these facts and Lemma~\ref{lNps}.

  \vskip.5\baselineskip
  \noindent [Proof for $\TMIV{i}$]
  We consider the case when $j = 4$.
  Firstly, we show that $\lnps{\TMIV{i}} = \TMIV{i-1}$.
  By the definitions of Thue-Morse words, we can write 
  \begin{align*}
    \TMIV{i}  &= b \cdot {\TM_{2i}}\\
              &= b \cdot {\TM_{2i-1}} \cdot {\overline{\TM_{2i-2}'}} \cdot b \cdot {\TM_{2i-2}}  \\
              &= \TMIII{i-1} \cdot \TMIV{i-1}
  \end{align*}
  and the suffix $\TMIV{i-1}$ is a Nyldon word by induction hypothesis.
  Assume on the contrary that there exists a longer Nyldon suffix $x \cdot \TMIV{i-1}$ of $\TMIV{i}$ for some non-empty string $x$.
  Here, $x$ is also assumed to be the shortest such string.
  Namely, $\TMIV{i-1}$ is the longest Nyldon proper suffix of $x \cdot \TMIV{i-1}$.
  By Lemma~\ref{lNps}, $x$ must be a Nyldon word such that $x \succ \TMIV{i-1}$ holds.  
  Moreover, $x$ is a suffix of $\TMIII{i-1}$.
  Since $\lnps{\TMIII{i-1}} = \TMXII{i-2}$ by induction hypothesis,
  $x$ is also a Nyldon suffix of $\TMXII{i-2}$.
  By Lemma~\ref{Nyldon-suffix}, $x \prec \TMXII{i-2}$.
  Hence, we have $\TMIV{i-1} \prec x \preceq \TMXII{i-2}$.
  However, this fact contradicts $\TMIV{i-1} \succ \TMXII{i-2}$ by Lemma~\ref{k_prec_k1}.
  This implies that $\lnps{\TMIV{i}} = \TMIV{i-1}$.
  Finally, we prove $\TMIV{i}$ is a Nyldon word by Lemma~\ref{lNps}.
  We know that $\TMIV{i} = \TMIII{i-1} \cdot \TMIV{i-1}$, 
  $\TMIII{i-1} \succ \TMIV{i-1}$ by Observation~\ref{obs:lexicographical_order_tmk},
  $\lnps{\TMIV{i}} = {\TMIV{i-1}}$,
  and $\TMIII{i-1}$ is a Nyldon word by induction hypothesis.
  Therefore, $\TMIV{i}$ is a Nyldon word by these facts and Lemma~\ref{lNps}.

  \vskip.5\baselineskip
  \noindent [Proof for $\TMV{i}$]
  We consider the case when $j = 5$.
  Firstly, we show that $\lnps{\TMV{i}} = \TMVIII{i-1}$.
  By the definitions of Thue-Morse words, we can write 
  \begin{align*}
    \TMV{i} &= b \cdot \TM_{2i} \cdot \TM_{2i-1} \cdot {\overline{\TM_{2i-2}'}}\\
            &= b \cdot {\TM_{2i-1}} \cdot {\overline{\TM_{2i-2}'}} \cdot b \cdot {\TM_{2i-2}} \cdot {\TM_{2i-1}} \cdot {\overline{\TM_{2i-2}'}} \\
            &= \TMIII{i-1} \cdot \TMVIII{i-1}
  \end{align*}
  and the suffix $\TMVIII{i-1}$ is a Nyldon word by induction hypothesis.
  Assume on the contrary that there exists a longer Nyldon suffix $x \cdot \TMVIII{i-1}$ of $\TMV{i}$ for some non-empty string $x$.
  Here, $x$ is also assumed to be the shortest such string.
  Namely, $\TMVIII{i-1}$ is the longest Nyldon proper suffix of $x \cdot \TMVIII{i-1}$.
  By Lemma~\ref{lNps}, $x$ must be a Nyldon word such that $x \succ \TMVIII{i-1}$ holds.  
  Moreover, $x$ is a suffix of $\TMIII{i-1}$.
  Since $\lnps{\TMIII{i-1}} = \TMXII{i-2}$ by induction hypothesis,
  $x$ is also a Nyldon suffix of $\TMXII{i-2}$.
  By Lemma~\ref{Nyldon-suffix}, $x \prec \TMXII{i-2}$.
  Hence, we have $\TMVIII{i-1} \prec x \preceq \TMXII{i-2}$.
However, this fact contradicts $\TMVIII{i-1} \succ \TMXII{i-2}$ by Lemma~\ref{k_prec_k1}.
This implies that $\lnps{\TMV{i}} = \TMVIII{i-1}$.
  Finally, we prove $\TMV{i}$ is a Nyldon word by Lemma~\ref{lNps}.
We know that $\TMV{i} = \TMIII{i-1} \cdot \TMVIII{i-1}$, 
  $\TMIII{i-1} \succ \TMVIII{i-1}$ by Observation~\ref{obs:lexicographical_order_tmk},
  $\lnps{\TMV{i}} = {\TMVIII{i-1}}$,
  and $\TMIII{i-1}$ is a Nyldon word by induction hypothesis.
  Therefore, $\TMV{i}$ is a Nyldon word by these facts and Lemma~\ref{lNps}.
  
  \vskip.5\baselineskip
  \noindent [Proof for $\TMVI{i}$]
  We consider the case when $j = 6$.
  Firstly, we show that $\lnps{\TMVI{i}} = \TMIX{i-1}$.
  By the definitions of Thue-Morse words, we can write 
  \begin{align*}
    \TMVI{i} &= b \cdot {\TM_{2i}} \cdot {\TM_{2i+1}'}\\
             &= b \cdot {\TM_{2i}} \cdot {\TM_{2i-1}} \cdot {\overline{\TM_{2i-2}'}} \cdot b \cdot {\TM_{2i-2}} \cdot {\overline{\TM_{2i}'}} \\
             &= \TMV{i} \cdot \TMIX{i-1}
  \end{align*}
  and the suffix $\TMIX{i-1}$ is a Nyldon word by induction hypothesis.
  Assume on the contrary that there exists a longer Nyldon suffix $x \cdot \TMIX{i-1}$ of $\TMVI{i}$ for some non-empty string $x$.
  Here, $x$ is also assumed to be the shortest such string.
  Namely, $\TMIX{i-1}$ is the longest Nyldon proper suffix of $x \cdot \TMIX{i-1}$.
  By Lemma~\ref{lNps}, $x$ must be a Nyldon word such that $x \succ \TMIX{i-1}$ holds.  
  Moreover, $x$ is a suffix of $\TMV{i}$.
  Since $\lnps{\TMV{i}} = \TMVIII{i-1}$ by induction hypothesis,
  $x$ is also a Nyldon suffix of $\TMVIII{i-1}$.
  By Lemma~\ref{Nyldon-suffix}, $x \prec \TMVIII{i-1}$.
  Hence, we have $\TMIX{i-1} \prec x \preceq \TMVIII{i-1}$.
  However, this fact contradicts $\TMIX{i-1} \prec x \prec \TMVIII{i-1}$ by Observation~\ref{obs:lexicographical_order_tmk}.
  This implies that $\lnps{\TMVI{i}} = \TMIX{i-1}$.
  Finally, we prove $\TMVI{i}$ is a Nyldon word by Lemma~\ref{lNps}.
  We know that $\TMVI{i} = \TMV{i} \cdot \TMIX{i-1}$, $\TMV{i} \succ \TMIX{i-1}$ by Lemma~\ref{k_prec_k1}, $\lnps{\TMVI{i}} = {\TMIX{i-1}}$,
  and $\TMV{i}$ is a Nyldon word by induction hypothesis.
  Therefore, $\TMVI{i}$ is a Nyldon word by these facts and Lemma~\ref{lNps}.

  \vskip.5\baselineskip
  \noindent [Proof for $\TMVII{i}$]
  We consider the case when $j = 7$.
  Firstly, we show that $\lnps{\TMVII{i}} = \TMX{i-1}$.
  By the definitions of Thue-Morse words, we can write 
  \begin{align*}
    \TMVII{i} &= b \cdot {\TM_{2i}} \cdot \TM_{2i+1} \\
              &= b \cdot {\TM_{2i}} \cdot {\TM_{2i-1}} \cdot {\overline{\TM_{2i-2}'}} \cdot b \cdot {\TM_{2i-2}} \cdot {\overline{\TM_{2i}}} \\
              &= \TMV{i} \cdot \TMX{i-1}
  \end{align*}
  and the suffix $\TMX{i-1}$ is a Nyldon word by induction hypothesis.
  Assume on the contrary that there exists a longer Nyldon suffix $x \cdot \TMX{i-1}$ of $\TMVII{i}$ for some non-empty string $x$.
  Here, $x$ is also assumed to be the shortest such string.
  Namely, $\TMX{i-1}$ is the longest Nyldon proper suffix of $x \cdot \TMX{i-1}$.
  By Lemma~\ref{lNps}, $x$ must be a Nyldon word such that $x \succ \TMX{i-1}$ holds.  
  Moreover, $x$ is a suffix of $\TMV{i}$.
  Since $\lnps{\TMV{i}} = \TMVIII{i-1}$ by induction hypothesis,
  $x$ is also a Nyldon suffix of $\TMVIII{i-1}$.
  By Lemma~\ref{Nyldon-suffix}, $x \prec \TMVIII{i-1}$.
  Hence, we have $\TMX{i-1} \prec x \preceq \TMVIII{i-1}$.
However, this fact contradicts $\TMX{i-1} \prec x \prec \TMVIII{i-1}$ by Observation~\ref{obs:lexicographical_order_tmk}.
This implies that $\lnps{\TMVII{i}} = \TMX{i-1}$.
  Finally, we prove $\TMVII{i}$ is a Nyldon word by Lemma~\ref{lNps}.
  We know that $\TMVII{i} = \TMV{i} \cdot \TMX{i-1}$, 
  $\TMV{i} \succ \TMX{i-1}$ by Lemma~\ref{k_prec_k1},
  $\lnps{\TMVII{i}} = {\TMX{i-1}}$,
  and $\TMV{i}$ is a Nyldon word by induction hypothesis.
  Therefore, $\TMVII{i}$ is a Nyldon word by these facts and Lemma~\ref{lNps}.

  \vskip.5\baselineskip
  \noindent [Proof for $\TMVIII{i}$]
  We consider the case when $j = 8$.
  Firstly, we show that $\lnps{\TMVIII{i}} = \TMXII{i-1}$.
  By the definitions of Thue-Morse words, we can write 
  \begin{align*}
    \TMVIII{i}  &= b \cdot \TM_{2i} \cdot \TM_{2i+1} \cdot {\overline{\TM_{2i}'}}\\
                &= b \cdot {\TM_{2i}} \cdot {\TM_{2i-1}} \cdot {\overline{\TM_{2i-2}'}} \cdot b \cdot {\TM_{2i-2}} \cdot {\overline{\TM_{2i}}} \cdot {\overline{\TM_{2i}'}}\\
                &= \TMV{i} \cdot \TMXII{i-1}
  \end{align*}
  and the suffix $\TMXII{i-1}$ is a Nyldon word by induction hypothesis.
  Assume on the contrary that there exists a longer Nyldon suffix $x \cdot \TMXII{i-1}$ of $\TMVIII{i}$ for some non-empty string $x$.
  Here, $x$ is also assumed to be the shortest such string.
  Namely, $\TMXII{i-1}$ is the longest Nyldon proper suffix of $x \cdot \TMXII{i-1}$.
  By Lemma~\ref{lNps}, $x$ must be a Nyldon word such that $x \succ \TMXII{i-1}$ holds.  
  Moreover, $x$ is a suffix of $\TMV{i}$.
  Since $\lnps{\TMV{i}} = \TMVIII{i-1}$ by induction hypothesis,
  $x$ is also a Nyldon suffix of $\TMVIII{i-1}$.
  By Lemma~\ref{Nyldon-suffix}, $x \prec \TMVIII{i-1}$.
  Hence, we have $\TMXII{i-1} \prec x \preceq \TMVIII{i-1}$.
However, this fact contradicts $\TMXII{i-1} \prec x \prec \TMVIII{i-1}$ by Observation~\ref{obs:lexicographical_order_tmk}.
This implies that $\lnps{\TMVIII{i}} = \TMXII{i-1}$.
  Finally, we prove $\TMVIII{i}$ is a Nyldon word by Lemma~\ref{lNps}.
  We know that $\TMVIII{i} = \TMV{i} \cdot \TMXII{i-1}$, 
  $\TMV{i} \succ \TMXII{i-1}$ by Lemma~\ref{k_prec_k1},
  $\lnps{\TMVIII{i}} = {\TMXII{i-1}}$,
  and $\TMV{i}$ is a Nyldon word by induction hypothesis.
  Therefore, $\TMVIII{i}$ is a Nyldon word by these facts and Lemma~\ref{lNps}.
  
  \vskip.5\baselineskip
  \noindent [Proof for $\TMIX{i}$]
  We consider the case when $j = 9$.
  Firstly, we show that $\lnps{\TMIX{i}} = \TMVI{i}$.
  By the definitions of Thue-Morse words, we can write 
  \begin{align*}
    \TMIX{i}  &= b \cdot {\TM_{2i}} \cdot {\overline{\TM_{2i+2}'}} \\
              &= b \cdot {\TM_{2i+1}} \cdot b \cdot {\TM_{2i}} \cdot {\TM_{2i+1}'} \\
              &= \TMII{i} \cdot \TMVI{i}
  \end{align*}
  and the suffix $\TMVI{i}$ is a Nyldon word by induction hypothesis.
  Assume on the contrary that there exists a longer Nyldon suffix $x \cdot \TMVI{i}$ of $\TMIX{i}$ for some non-empty string $x$.
  Here, $x$ is also assumed to be the shortest such string.
  Namely, $\TMVI{i}$ is the longest Nyldon proper suffix of $x \cdot \TMVI{i}$.
  By Lemma~\ref{lNps}, $x$ must be a Nyldon word such that $x \succ \TMVI{i}$ holds.  
  Moreover, $x$ is a suffix of $\TMII{i}$.
  Since $\lnps{\TMII{i}} = \TMX{i-1}$ by induction hypothesis,
  $x$ is also a Nyldon suffix of $\TMX{i-1}$.
  By Lemma~\ref{Nyldon-suffix}, $x \prec \TMX{i-1}$.
  Hence, we have $\TMVI{i} \prec x \preceq \TMX{i-1}$.
However, this fact contradicts $\TMVI{i} \succ \TMX{i-1}$ by Lemma~\ref{k_prec_k1}.
This implies that $\lnps{\TMIX{i}} = \TMVI{i}$.
  Finally, we prove $\TMIX{i}$ is a Nyldon word by Lemma~\ref{lNps}.
We know that $\TMIX{i} = \TMII{i} \cdot \TMVI{i}$, 
  $\TMII{i} \succ \TMVI{i}$ by Observation~\ref{obs:lexicographical_order_tmk},
  $\lnps{\TMIX{i}} = {\TMVI{i}}$,
  and $\TMII{i}$ is a Nyldon word by induction hypothesis.
  Therefore, $\TMIX{i}$ is a Nyldon word by these facts and Lemma~\ref{lNps}.

  \vskip.5\baselineskip
  \noindent [Proof for $\TMX{i}$]
  We consider the case when $j = 10$.
  Firstly, we show that $\lnps{\TMX{i}} = \TMVII{i}$.
  By the definitions of Thue-Morse words, we can write 
  \begin{align*}
    \TMX{i} &= b \cdot {\TM_{2i}} \cdot {\overline{\TM_{2i+2}}} \\
            &= b \cdot {\TM_{2i+1}} \cdot b \cdot {\TM_{2i}} \cdot {\TM_{2i+1}} \\
            &= \TMII{i} \cdot \TMVII{i}
  \end{align*}
  and the suffix $\TMVII{i}$ is a Nyldon word by induction hypothesis.
  Assume on the contrary that there exists a longer Nyldon suffix $x \cdot \TMVII{i}$ of $\TMX{i}$ for some non-empty string $x$.
  Here, $x$ is also assumed to be the shortest such string.
  Namely, $\TMVII{i}$ is the longest Nyldon proper suffix of $x \cdot \TMVII{i}$.
  By Lemma~\ref{lNps}, $x$ must be a Nyldon word such that $x \succ \TMVII{i}$ holds.  
  Moreover, $x$ is a suffix of $\TMII{i}$.
  Since $\lnps{\TMII{i}} = \TMX{i-1}$ by induction hypothesis,
  $x$ is also a Nyldon suffix of $\TMX{i-1}$.
  By Lemma~\ref{Nyldon-suffix}, $x \prec \TMX{i-1}$.
  Hence, we have $\TMVII{i} \prec x \preceq \TMX{i-1}$.
However, this fact contradicts $\TMVII{i} \succ \TMX{i-1}$ by Lemma~\ref{k_prec_k1}.
This implies that $\lnps{\TMX{i}} = \TMVII{i}$.
  Finally, we prove $\TMX{i}$ is a Nyldon word by Lemma~\ref{lNps}.
We know that $\TMX{i} = \TMII{i} \cdot \TMVII{i}$, 
  $\TMII{i} \succ \TMVII{i}$ by Observation~\ref{obs:lexicographical_order_tmk},
  $\lnps{\TMX{i}} = {\TMVII{i}}$,
  and $\TMII{i}$ is a Nyldon word by induction hypothesis.
  Therefore, $\TMX{i}$ is a Nyldon word by these facts and Lemma~\ref{lNps}.

  \vskip.5\baselineskip
  \noindent [Proof for $\TMXI{i}$]
  We consider the case when $j = 11$.
  Firstly, we show that $\lnps{\TMXI{i}} = \TMVI{i}$.
  By the definitions of Thue-Morse words, we can write 
  \begin{align*}
    \TMXI{i}  &=  b \cdot \TM_{2i} \cdot \TM_{2i+1} \cdot {\overline{\TM_{2i+2}'}}\\
              &= b \cdot \TM_{2i} \cdot \TM_{2i+1}\cdot {\overline{\TM_{2i}'}} \cdot b \cdot {\TM_{2i}} \cdot {\TM_{2i+1}'}\\
              &= \TMVIII{i} \cdot \TMVI{i}
  \end{align*}
  and the suffix $\TMVI{i}$ is a Nyldon word by induction hypothesis.
  Assume on the contrary that there exists a longer Nyldon suffix $x \cdot \TMVI{i}$ of $\TMXI{i}$ for some non-empty string $x$.
  Here, $x$ is also assumed to be the shortest such string.
  Namely, $\TMVI{i}$ is the longest Nyldon proper suffix of $x \cdot \TMVI{i}$.
  By Lemma~\ref{lNps}, $x$ must be a Nyldon word such that $x \succ \TMVI{i}$ holds.  
  Moreover, $x$ is a suffix of $\TMVIII{i}$.
  Since $\lnps{\TMVIII{i}} = \TMVII{i-1}$ by induction hypothesis,
  $x$ is also a Nyldon suffix of $\TMVII{i-1}$.
  By Lemma~\ref{Nyldon-suffix}, $x \prec \TMVII{i-1}$.
  Hence, we have $\TMVI{i} \prec x \preceq \TMVII{i-1}$.
However, this fact contradicts $\TMVI{i} \succ \TMVII{i-1}$ by Lemma~\ref{k_prec_k1}.
This implies that $\lnps{\TMXI{i}} = \TMVI{i}$.
  Finally, we prove $\TMXI{i}$ is a Nyldon word by Lemma~\ref{lNps}.
We know that $\TMXI{i} = \TMVIII{i} \cdot \TMVI{i}$, 
  $\TMVIII{i} \succ \TMVI{i}$ by Observation~\ref{obs:lexicographical_order_tmk},
  $\lnps{\TMXI{i}} = {\TMVI{i}}$,
  and $\TMVIII{i}$ is a Nyldon word by induction hypothesis.
  Therefore, $\TMXI{i}$ is a Nyldon word by these facts and Lemma~\ref{lNps}.

  \vskip.5\baselineskip
  \noindent [Proof for $\TMXII{i}$]
  We consider the case when $j = 12$.
  Firstly, we show that $\lnps{\TMXII{i}} = \TMXI{i}$.
  By the definitions of Thue-Morse words, we can write 
  \begin{align*}
    \TMXII{i} &= b \cdot \TM_{2i} \cdot \overline{\TM_{2i+2}} \cdot {\overline{\TM_{2i+2}'}}\\
              &= b \cdot {\TM_{2i+1}'} \cdot b \cdot {\TM_{2i}} \cdot {\TM_{2i+1}} \cdot {\overline{\TM_{2i+2}'}}\\
              &= \TMI{i} \cdot \TMXI{i}
  \end{align*}
  and the suffix $\TMXI{i}$ is a Nyldon word by induction hypothesis.
  Assume on the contrary that there exists a longer Nyldon suffix $x \cdot \TMXI{i}$ of $\TMXII{i}$ for some non-empty string $x$.
  Here, $x$ is also assumed to be the shortest such string.
  Namely, $\TMXI{i}$ is the longest Nyldon proper suffix of $x \cdot \TMXI{i}$.
  By Lemma~\ref{lNps}, $x$ must be a Nyldon word such that $x \succ \TMXI{i}$ holds.  
  Moreover, $x$ is a suffix of $\TMI{i}$.
  Since $\lnps{\TMI{i}} = \TMIX{i-1}$ by induction hypothesis,
  $x$ is also a Nyldon suffix of $\TMIX{i-1}$.
  By Lemma~\ref{Nyldon-suffix}, $x \prec \TMIX{i-1}$.
  Hence, we have $\TMXI{i} \prec x \preceq \TMIX{i-1}$.
  However, this fact contradicts $\TMXI{i} \succ \TMIX{i-1}$ by Lemma~\ref{k_prec_k1}.
  This implies that $\lnps{\TMXII{i}} = \TMXI{i}$.
  Finally, we prove $\TMXII{i}$ is a Nyldon word by Lemma~\ref{lNps}.
  We know that $\TMXII{i} = \TMI{i} \cdot \TMXI{i}$, 
  $\TMI{i} \succ \TMXI{i}$ by Observation~\ref{obs:lexicographical_order_tmk},
  $\lnps{\TMXII{i}} = {\TMXI{i}}$,
  and $\TMI{i}$ is a Nyldon word by induction hypothesis.
  Therefore, $\TMXII{i}$ is a Nyldon word by these facts and Lemma~\ref{lNps}.

  Finally, we obtain this lemma.
\end{proof}

Now, we are ready to prove the following (third) property.

\begin{lemma} \label{lem:Nyldon-prop}
  For every $n \geq 1$, $w_n$ is a Nyldon word.
\end{lemma}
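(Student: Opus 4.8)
The plan is to observe that Lemma~\ref{lem:Nyldon-prop} is essentially immediate once we combine Lemma~\ref{lem:many-Nyldon} with the identity $\TMXII{n-4} = w_n$ noted just before Observation~\ref{obs:lexicographical_order_tmk}, and then to dispose of the small indices by hand. So I would split into the cases $n \le 4$ and $n \ge 5$.

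For the base cases, $w_1 = a$ and $w_2 = b$ have length $1$ and are therefore Nyldon by definition. For $w_3 = ba$, the only factorization into strictly shorter non-empty words is $b, a$, and since $b \succ a$ this is not a non-decreasing product of Nyldon words; hence $ba$ is a Nyldon word. For $w_4 = baabbaa$, I would first determine $\lnps{w_4} = baa$: among the proper suffixes $aa$, $bbaa$, $abbaa$, $aabbaa$ each admits a lexicographically strictly increasing factorization into shorter Nyldon words (so none is Nyldon), while $baa$ is a Nyldon word. Then $w_4 \cdot (baa)^{-1} = baab$, and a one-step application of Lemma~\ref{lNps} shows $baab$ is Nyldon (here $\lnps{baab} = b$, $baa$ is a Nyldon word, and $baa \succ b$ because $b$ is a proper prefix of $baa$). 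Since moreover $baab \succ baa$ (as $baa$ is a proper prefix of $baab$), Lemma~\ref{lNps} gives that $w_4 = baabbaa$ is a Nyldon word.

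For $n \ge 5$, I would invoke the identity $w_n = \TMXII{n-4}$, where $n - 4 \ge 1$, so $w_n \in \Pi_{n-4}$. By Lemma~\ref{lem:many-Nyldon}, every element of $\Pi_k$ is a Nyldon word for all $k \ge 1$; in particular $w_n = \TMXII{n-4}$ is a Nyldon word. Combining the two cases yields the lemma.

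There is essentially no obstacle here: all the combinatorial difficulty---the twelve mutually recursive $\lnps$ identities of Table~\ref{table_R_k}, the ordering facts of Observation~\ref{obs:lexicographical_order_tmk} and Lemma~\ref{k_prec_k1}, and the repeated use of Lemma~\ref{lNps}---has already been absorbed into Lemma~\ref{lem:many-Nyldon}. The only point that needs a moment's care is the bookkeeping at the boundary: $\Pi_k$, and hence Lemma~\ref{lem:many-Nyldon}, are stated only for $k \ge 1$, which is precisely why the four smallest words $w_1, \dots, w_4$ (which would correspond to $k \le 0$) must be verified directly, and why the ``$\TMXII{\cdot}$'' reduction above starts at $n = 5$.
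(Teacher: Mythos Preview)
Your proposal is correct and follows essentially the same approach as the paper: verify $w_1,\dots,w_4$ directly, then for $n\ge 5$ use the identity $w_n=\TMXII{n-4}$ together with Lemma~\ref{lem:many-Nyldon}. The only (harmless) slip is that for the suffix $aa$ of $w_4$ the witnessing factorization $a,a$ is merely non-decreasing, not strictly increasing; this still shows $aa$ is not Nyldon, so nothing in your argument is affected.
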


\begin{proof}
  We can see that $w_1 = a$, $w_2 = b$, $w_3 = ba$, and $w_4 = baabbaa$ are Nyldon words.
  Recall that $\TMXII{k-4} = w_{k}$ for every $k \geq 5$ by the definitions.
  By combining with Lemma~\ref{lem:many-Nyldon}, 
  $w_n$ is a Nyldon word for every $n \geq 1$.
\end{proof} \section{Conclusions}
We considered the Nyldon factorizations of Fibonacci and Thue-Morse words.
We gave the Nyldon factorization of the finite Fibonacci words, 
and fully characterize the Nyldon factorization of the finite Thue-Morse words.
We also showed that there exists a non-decreasing sequence of Nyldon words that is a factorization of the infinite Thue-Morse words.
There are numerous studies on factorizations of certain well-known string families, such as Fibonacci words and Thue-Morse words.
For instance, the Lyndon factorization of the infinite Fibonacci words and sturmian words~\cite{MELANCON1996,MELANCON2000137}, the Lyndon factorization of the infinite Thue-Morse words~\cite{MELANCON1997}, the Crochemore factorization of the sturmian words~\cite{berstel2006crochemore}, repetition factorizations of the finite Fibonacci words~\cite{DBLP:journals/mst/InoueMNIBT22,DBLP:conf/spire/KishiNI23}, palindromic Ziv-Lempel and Crochemore factorizations of $n$-bonacci infinite words~\cite{jahannia2019palindromic} have been considered.
Studies on factorizations for specific strings like these can be useful not only for combinatorics on words, but also for algorithmic studies (e.g., analyzing the complexity of algorithms). We hope our new results also contribute to algorithms that use the Nyldon words and the Nyldon factorizations. 
\section*{Acknowledgments}
This work was supported by JSPS KAKENHI Grant Numbers 
JP25K00136 (YN),
JP23K24808, JP23K18466~(SI),
JP24K02899~(HB), 
and JST BOOST, Japan Grant Number JPMJBS2406~(KK).

\bibliographystyle{abbrv}
\bibliography{ref}

\begin{thebibliography}{10}

\bibitem{berstel2006crochemore}
J.~Berstel and A.~Savelli.
\newblock Crochemore factorization of {Sturmian} and other infinite words.
\newblock In {\em International Symposium on Mathematical Foundations of Computer Science}, pages 157--166. Springer, 2006.

\bibitem{DBLP:conf/mfcs/BonizzoniFRZZ24}
P.~Bonizzoni, C.~{De Felice}, B.~Riccardi, R.~Zaccagnino, and R.~Zizza.
\newblock Unveiling the connection between the {Lyndon} factorization and the canonical inverse {Lyndon} factorization via a border property.
\newblock In R.~Kr{\'{a}}lovic and A.~Kucera, editors, {\em 49th International Symposium on Mathematical Foundations of Computer Science, {MFCS} 2024, August 26-30, 2024, Bratislava, Slovakia}, volume 306 of {\em LIPIcs}, pages 31:1--31:14. Schloss Dagstuhl - Leibniz-Zentrum f{\"{u}}r Informatik, 2024.

\bibitem{BONIZZONI2018281}
P.~Bonizzoni, C.~{De Felice}, R.~Zaccagnino, and R.~Zizza.
\newblock Inverse {Lyndon} words and inverse {Lyndon} factorizations of words.
\newblock {\em Advances in Applied Mathematics}, 101:281--319, 2018.

\bibitem{Lyndon_application_math}
S.~Brlek, J.-O. Lachaud, X.~Provençal, and C.~Reutenauer.
\newblock Lyndon + {C}hristoffel = digitally convex.
\newblock {\em Pattern Recognition}, 42(10):2239--2246, 2009.
\newblock Selected papers from the 14th IAPR International Conference on Discrete Geometry for Computer Imagery 2008.

\bibitem{ChenFL58:_lyndon_factorization_}
K.~T. Chen, R.~H. Fox, and R.~C. Lyndon.
\newblock Free differential calculus. {IV. The} quotient groups of the lower central series.
\newblock {\em Annals of Mathematics}, 68(1):81--95, 1958.

\bibitem{GARG2021_Nyldon-like-set}
S.~Garg.
\newblock New results on {Nyldon} words and {Nyldon}-like sets.
\newblock {\em Advances in Applied Mathematics}, 131:102249, 2021.

\bibitem{Nyldon}
D.~Grinberg.
\newblock {"Nyldon words":understanding a class of words factorizing the free monoid increasingly}.
\newblock https://mathoverflow.net/questions/187451/, 2014.

\bibitem{DBLP:conf/stringology/HirakawaNIT21}
R.~Hirakawa, Y.~Nakashima, S.~Inenaga, and M.~Takeda.
\newblock Counting {Lyndon} subsequences.
\newblock In J.~Holub and J.~Zd{\'{a}}rek, editors, {\em Prague Stringology Conference 2021, Prague, Czech Republic, August 30-31, 2021}, pages 53--60. Czech Technical University in Prague, Faculty of Information Technology, Department of Theoretical Computer Science, 2021.

\bibitem{MELANCON1997}
A.~Ido and G.~Melançon.
\newblock Lyndon factorization of the {T}hue-{M}orse word and its relatives.
\newblock {\em Discrete Mathematics and Theoretical Computer Science. DMTCS}, 1:43--52, 1997.

\bibitem{DBLP:journals/mst/InoueMNIBT22}
H.~Inoue, Y.~Matsuoka, Y.~Nakashima, S.~Inenaga, H.~Bannai, and M.~Takeda.
\newblock Factorizing strings into repetitions.
\newblock {\em Theory Comput. Syst.}, 66(2):484--501, 2022.

\bibitem{jahannia2019palindromic}
M.~Jahannia, M.~Mohammad-noori, N.~Rampersad, and M.~Stipulanti.
\newblock Palindromic {Ziv}--{Lempel} and {Crochemore} factorizations of m-bonacci infinite words.
\newblock {\em Theoretical Computer Science}, 790:16--40, 2019.

\bibitem{DBLP:conf/spire/KishiNI23}
K.~Kishi, Y.~Nakashima, and S.~Inenaga.
\newblock Largest repetition factorization of {Fibonacci} words.
\newblock In F.~M. Nardini, N.~Pisanti, and R.~Venturini, editors, {\em String Processing and Information Retrieval - 30th International Symposium, {SPIRE} 2023, Pisa, Italy, September 26-28, 2023, Proceedings}, volume 14240 of {\em Lecture Notes in Computer Science}, pages 284--296. Springer, 2023.

\bibitem{lyndon54:_burnside}
R.~C. Lyndon.
\newblock On {B}urnside's problem.
\newblock {\em Transactions of the American Mathematical Society}, 77:202--215, 1954.

\bibitem{MELANCON1996}
G.~Melançon.
\newblock Lyndon factorization of infinite words.
\newblock In C.~Puech and R.~Reischuk, editors, {\em STACS 96}, volume 1046 of {\em Lecture Notes in Computer Science}, pages 147--154. Springer, 1996.

\bibitem{MELANCON2000137}
G.~Melançon.
\newblock Lyndon factorization of {Sturmian} words.
\newblock {\em Discrete Mathematics}, 210(1):137--149, 2000.

\bibitem{Lyndon_application_algorithm}
M.~Mucha.
\newblock Lyndon {W}ords and {S}hort {S}uperstrings.
\newblock In {\em Proceedings of the 2013 Annual ACM-SIAM Symposium on Discrete Algorithms (SODA)}, pages 958--972, 2013.

\bibitem{SIROMONEY1994101}
R.~Siromoney, L.~Mathew, V.~Dare, and K.~Subramanian.
\newblock Infinite {Lyndon} words.
\newblock {\em Information Processing Letters}, 50(2):101--104, 1994.

\bibitem{Nyldon_Charlier}
Émilie Charlier, M.~Philibert, and M.~Stipulanti.
\newblock Nyldon words.
\newblock {\em Journal of Combinatorial Theory, Series A}, 167:60--90, 2019.

\end{thebibliography}

\end{document}